\documentclass[aps,prb,twocolumn,showpacs]{revtex4}

\usepackage{epsfig}
\usepackage{graphicx}
\usepackage{amsfonts}
\usepackage[figuresright]{rotating}
\usepackage{amssymb}
\usepackage{amsmath}
\usepackage{psfrag}
\usepackage{mdsymbol}

\setcounter{MaxMatrixCols}{10}

\newtheorem{theorem}{Theorem}
\newtheorem{lemma}{Lemma}

\newenvironment{proof}[1][Proof]{\noindent \textbf{#1: }}{\ \rule{0.5em}{0.5em}}

\def\be{\begin{equation}}
\def\ee{\end{equation}}
\def\bea{\begin{eqnarray}}
\def\eea{\end{eqnarray}}

\def\nn{\nonumber}

\newcommand{\ket}[1]{| #1 \rangle}

\begin{document}
\title{Quaternion, harmonic oscillator, and high-dimensional
topological states }
\author{Congjun Wu}
\affiliation{Department of Physics, University of California, San Diego,
La Jolla, CA 92093}

\begin{abstract}
Quaternion, an extension of complex number, is the first discovered non-commutative division algebra by William Rowan Hamilton in 1843.
In this article, we review the recent progress on building up
the connection between the mathematical concept of quaternoinic analyticity
and the physics of high-dimensional topological states.
Three- and four-dimensional harmonic oscillator wavefunctions
are reorganized by the SU(2) Aharanov-Casher gauge potential to
yield high-dimensional Landau levels possessing the full rotational
symmetries and flat energy dispersions.
The lowest Landau level wavefunctions exhibit quaternionic analyticity, satisfying the {\it Cauchy-Riemann-Fueter} condition, which
generalizes the two-dimensional complex analyticity to three and four dimensions.
It is also the Euclidean version of the helical Dirac and the
chiral Weyl equations.
After dimensional reductions, these states become two- and three-dimensional
topological states maintaining time-reversal symmetry but exhibiting
broken parity.
We speculate that quaternionic analyticity can provide a guiding
principle for future researches on high-dimensional interacting
topological states.
Other progresses including high-dimensional Landau levels
of Dirac fermions, their connections to high energy physics,
and high-dimensional Landau levels in the Landau-type gauges, are also reviewed.
This research is also an important application of the
mathematical subject of quaternion analysis in theoretical
physics, and provides useful guidance for the experimental
explorations on novel topological states of matter.
\end{abstract}
\maketitle

\section{Introduction}
\label{sect:intro}

Quaternions, also called Hamilton numbers, are the first non-commutative
division algebra as a natural extension to complex numbers
(See the quaternion plaque in Fig. \ref{fig:quaternion}).
Imaginary quaternion units $i,j$ and $k$
are isomorphic to the anti-commutative SU(2) Pauli matrices
$-i\sigma_{1,2,3}$.
Hamilton used quaternions to represent three-(3D) and
four-dimensional (4D) rotations, and performed the product of two rotations.
In fact, it is amazing that he was well ahead of his time --
equivalently he was using the spin-$\frac{1}{2}$ fundamental
representations of the SU(2) group, which was before
quantum mechanics was discovered.
Nevertheless, the development of quaternoinic analysis met
significant difficulty since quaternions do not commute.
An important progress was made by Fueter in 1935 as reviewed in
Ref. \cite{sudbery1979}, who defined the Cauchy-Riemann-Fueter condition
for quaternionic analyticity.
Amazingly again, this is essentially the Euclidean version of the
Weyl equation proposed in 1929.
Later on, there have been considerable efforts in constructing
quantum mechanics and quantum field theory based on quaternions
\cite{adler1995,finkelstein1962,yang2005}.

On the other hand, the past decade has witnessed a tremendous
progress in the study of topological states of matter, in particular,
time-reversal invariant topological insulators in two dimensions
(2D) and 3D.
Topological properties of their band structures are characterized
by a $\mathbb{Z}_{2}$-index, which are stable against time-reversal
invariant perturbations and weak interactions
\cite{bernevig2006a,kane2005,kane2005a,fu2007,fu2007a,moore2007,bernevig2006,
wu2006,qi2008,roy2009,roy2010}.
These studies are further developments of quantum anomalous Hall insulators characterized by the integer-valued Chern numbers \cite{thouless1982,haldane1988}.
Later on, topological states of matter including both insulating and
superconducting states have been classified into ten different
classes in terms of their properties under the chiral, time-reversal, and particle-hole symmetries \cite{kitaev2009,schnyder2008}.
These studies have mostly focused on lattice systems.
The wavefunctions of the Bloch bands are complicated, and their energy
spectra are dispersive, both of which are obstacles for the study of
high-dimensional fractional topological states.

In contrast, the 2D quantum Hall states \cite{klitzing1980,tsui1982}
are early examples of topological states of matter studied
in condensed matter physics.
They arise from Landau level quantizations due to the cyclotron
motion of electrons in magnetic fields \cite{girvin1999}.
Their wavefunctions are simple and elegant, which are basically
harmonic oscillator wavefunctions.
They are reorganized
to exhibit analytic properties by an external magnetic field.

Generally speaking, a 2D quantum mechanical wavefunction
$\psi(x,y)$ is complex-valued, but not necessarily complex analytic.
We do not need all the set of 2D harmonic oscillator wavefunctions,
but would like to select a subset of them with non-trivial
topological properties, then complex analyticity is a natural
selection criterion.
Indeed, the lowest Landau level wavefunctions exhibit complex
analyticity.
Mathematically, it is imposed by the Cauchy-Riemann condition
(See Eq. \ref{eq:cauchy} in the text.), and
physically it is implemented by the magnetic field, which reflects
that the cyclotron motion is chiral.
This fact greatly facilitated the construction of Laughlin
wavefunction in the study of fractional quantum Hall states
\cite{laughlin1983}.

How to generalize Landau levels to 3D and even higher dimensions
is a challenging question.
A pioneering work was done by Shoucheng and his former student
Jiangping Hu in 2001 \cite{zhang2001}.
They constructed the Landau level problem on a compact space of
the $S^4$ sphere, which generalizes Haldane's formulation of the 2D
Landau levels on an $S^2$ sphere.
Haldane's construction is based on the 1st Hopf map \cite{haldane1983},
in which a particle is coupled to the vector potential from
a $U(1)$ magnetic monopole.
Zhang and Hu considered a particle lying on the $S^4$ sphere coupled to
an SU(2) monopole gauge field, and employed the 2nd Hopf map
which maps a unit vector on the $S^4$ sphere to a normalized
4-component spinor.
The Landau level wavefunctions are expressed in terms of the
four components of the spinor.
Such a system is topologically non-trivial characterized by
the 2nd Chern number possessing time-reversal symmetry.
This construction is very beautiful, nevertheless, it needs
significantly advanced mathematical physics knowledge which
may not be common for the general readers in the condensed
matter physics, and atomic, molecular, and optical physics
community.

We have constructed high-dimensional topological states
(e.g. 3D and 4D) based on harmonic oscillator wavefunctions
in flat spaces \cite{li2012a,li2013}.
They exhibit flat energy dispersions and non-trivial topological
properties, hence, they are generalizations of the 2D Landau level
problem to high dimensions.
Again we will select and reorganize a subset of wavefunctions
in seeking for non-trivial topological properties.
The strategy we employ is to use quaternion analyticity
as the new selection criterion to replace the previous one of
complex analyticity.
Physically it is imposed by spin-orbit coupling, which
couples orbital angular momentum and spin together
to form the helicity structure.
In other words, the helicity generated by spin-orbit
coupling plays the role of 2D chirality due to the
magnetic field.
Our proposed Hamiltonians can also be formulated in terms of
spin-$\frac{1}{2}$ fermions coupled to the SU(2) gauge potential,
or, the Aharanov-Casher potential.
Gapless helical Dirac surface modes, or, chiral Weyl modes,
appear on open boundaries manifesting the non-trivial
topology of the bulk states.

We have also constructed high-dimensional Landau levels of Dirac
fermions \cite{li2012}, whose Hamiltonians can be interpreted in
terms of complex quaternions.
The zeroth Landau levels of Dirac fermions are a branch of
half-fermion Jackiw-Rebbi modes \cite{jackiw1976}, which are
degenerate over all the 3D angular momentum quantum numbers.
Unlike the usual parity anomaly and chiral anomaly in which
massless Dirac fermions are minimally coupled to the
background gauge fields,
these Dirac Landau level problems correspond to a non-minimal
coupling between massless Dirac fermions and background fields.
This problem lies at the interfaces among condensed matter physics,
mathematical physics, and high energy physics.

High-dimensional Landau levels can also be constructed in the
Landau-type gauge, in which rotational symmetry is explicitly
broken \cite{li2013a}.
The helical, or, chiral plane-waves are reorganized by
spatially dependent spin-orbit coupling to yield non-trivial
topological properties.
The 4D quantum Hall effect of the SU(2) Landau levels have
also been studied in the Landau-type gauge, which exhibits
the quantized non-linear electromagnetic response as a spatially
separated 3D chiral anomaly.

We speculate that quaternionic analyticity would act as a guiding
principle for studying high-dimensional interacting topological
states, which are a major challenging question.
The high-dimensional Landau level problems reviewed below provide
an ideal platform for this research.
This research is at the interface between mathematical and
condensed matter physics, and has potential benefits
to both fields.

This review is organized as follows:
In Sect. \ref{sect:history}, histories of complex number and
quaternion, and the basic knowledge of complex analysis and
quaternion analysis are reviewed.
In Sect. \ref{sect:2Dlandau}, the 2D Landau level problems are
reviewed both for the non-relativistic particles and for relativistic
particles.
The complex analyticity of the lowest Landau level wavefunctions
is presented.
In Sect. \ref{sect:3DLL}, the constructions of high-dimensional
Landau levels in 3D and 4D with explicit rotational symmetries
are reviewed.
The quaternionic analyticity of the lowest Landau level
wavefunctions, and the bulk-boundary correspondences
in terms of the Euclidean and Minkowski versions of
the Weyl equation are presented.
In Sect. \ref{sect:reduction}, we review the dimensional reductions
from the 3D and 4D Landau level problems to yield the
2D and 3D isotropic but parity-broken Landau levels.
They can be constructed by combining harmonic potentials and
linear spin-orbit couplings.
In Sect. \ref{sect:diracLL}, the high-dimensional Landau levels
of Dirac fermions are constructed, which can be viewed as
Dirac equations in the phase spaces.
They are related to gapless Dirac fermions non-minimally coupled
to background fields.
In Sect. \ref{sect:landaugauge},
high-dimensional Landau levels in the anisotropic Landau-type
gauge are reviewed.
The 4D quantum Hall responses are derived as a spatially
separated chiral anomaly.
Conclusions and outlooks are presented in Sect. \ref{sect:conclusion}.

\section{Histories of complex number and quaternion}
\label{sect:history}

\subsection{Complex number}
Complex number plays an essential role in mathematics and quantum physics.
The invention of complex number was actually related to the history of
solving the algebraic cubic equations, rather than solving
the quadratic equation of $x^2=-1$.
If one lived in the 16th century, one could simply say that
such an equation has no solution.
But cubic equations are different.
Consider a reduced cubic equation $x^3+p x+q=0$, which can be solved by using radicals.
Here is the Cardano formula,
\bea
x_1&=&c_1+c_2, ~~ x_2=c_1 e^{i\frac{2\pi}{3}}+ c_2 e^{-i\frac{2\pi}{3}},
\nn \\
x_3&=& c_1 e^{-i\frac{2\pi}{3}}+ c_2 e^{i\frac{2\pi}{3}},
\label{eq:cubic}
\eea
where
\bea
c_{1}=\sqrt[3]{-\frac{q}{2}+ \sqrt{\Delta}}, \ \ \, \ \ \,
c_{2}=\sqrt[3]{-\frac{q}{2}-\sqrt{\Delta}},
\eea
with the discriminant $\Delta=(\frac{q}{2})^2+(\frac{p}{3})^3$.
The key point of the expressions in Eq. \ref{eq:cubic} is that they involve
complex numbers.
For example, consider a cubic equation with real coefficients
and three real roots $x_{1,2,3}$.
It is purely a real problem: It starts with real coefficients and ends up
with real solutions.
Nevertheless, it can be proved that there is no way to bypass $i$.
Complex conjugate numbers appear in the intermediate steps,
and finally they cancel to yield real solutions.
For a concrete example, for the case that $p=-9$ and $q=8$,
complex numbers are unavoidable since $\sqrt{\Delta}=\sqrt{-11}$.
The readers may check how to arrive at three real roots
of  $x_{1,2,3}=1,-\frac{1}{2}\pm
\frac{\sqrt{33}}{2}$.

Once the concept of complex number was accepted, it opened up an entire
new field for both mathematics and physics.
Early developments include the geometric interpretation of complex
numbers in terms of the Gauss plane, the application of complex
numbers for two-dimensional rotations, and the Euler
formula
\bea
e^{i\theta}=\cos\theta+i\sin\theta.
\eea
The complex phase appears in the Euler formula, which is widely used
in describing  mechanical and electromagnetic waves in classic physics,
and also quantum mechanical wavefunctions.
Moreover, when a complex-valued function $f(x,y)$ satisfies
the Cauchy-Riemann condition,
\bea
\frac{\partial f}{\partial x} +i \frac{\partial f}{\partial y}=0,
\label{eq:cauchy}
\eea
it means that it only depends on $z=x+iy$ but not on $\bar z=x-iy$.
The Cauchy-Riemann condition sets up the foundation of complex analysis,
giving rise to the Cauchy integral,
\bea
\frac{1}{2\pi i} \oint \frac{1}{z-z_0} dz f(z)=f(z_0).
\eea

For physicists, one practical use of complex analysis is to
calculate loop integrals.
Certainly, its importance is well beyond this.
Complex analysis is the basic tool for many modern branches of mathematics.
For example, it gives the most elegant proof to {\it the fundamental theorem
of algebra}: An algebraic equation $f(z)=0$, i.e. $f(z)$ is
a $n$-th order polynomial, has $n$ complex roots.
The proof is essentially to count the phase winding number of $1/f(z)$
as moving around a circle of radius $R\to +\infty$, which simply
equals $n$.
On the other hand, the winding number is a topological invariant
equal to the number of poles of $1/f(z)$, or, the number of
zeros of $f(z)$.
It is also the basic tool for number theory: The Riemann hypothesis,
which aims at studying the distribution of prime numbers, is
formulated as a complex analysis problem of the distributions
of the zeros of the Riemann $\zeta(z)$-function.

Complex numbers actually are inessential in the entire
branches of classical physics.
It is well-known that the complex number description for classic waves
is only a convenience but not necessary.
The first time that complex numbers are necessary is
in quantum mechanics -- the Schr\"odinger equation,
\bea
i\hbar\partial_t  \psi=H\psi.
\eea
In contrast, classic wave equations involve $\partial_t^2$.
In fact,  Schr\"oedinger attempted to eliminate $i$ in his equation,
but did not succeed.
Hence, to a certain extent, $i$, or, the complex phase, is more
important than $\hbar$
in quantum physics.

\subsection{Quaternion and quaternoinic analyticity}
Since 2D  rotations can be elegantly described by
the multiplication of complex numbers.
It is reasonable to expect that 3D rotations could
also be described in a similar way by extending complex numbers
to include the 3rd dimension.
Simply adding another imaginary unit $j$ to construct $x+yi+zj$ does not
work, since the product of two imaginary units
$ij\neq i \neq j \neq \pm 1$.
It has to be a new imaginary unit defined as $k=ij$, and then
the quaternion is constructed as,
\bea
q=x+yi+zj+uk.
\eea
The quaternion algebra,
\bea
i^2=j^2=k^2=ijk=-1,
\label{eq:quaternion}
\eea
was invented by Hamilton in 1843 when he passed the Brougham bridge
in Dublin (See Fig. \ref{fig:quaternion}.).
He realized in a genius way that the product table of the imaginary
units cannot be commutative.
In fact, it can be derived based on Eq. \ref{eq:quaternion}
that $i$, $j$, and $k$ anti-commute with one another, {\it i.e.},
\bea
ij=-ji, \ \ \, jk=-kj, \ \ \, ki=-ik.
\eea
This is the first non-commutative division algebra discovered,
and actually it was constructed before the invention of the concept of
matrix.
In modern languages, quaternion imaginary units are isomorphic to
Pauli matrices $-i\sigma_1, -i\sigma_2, -i\sigma_3$.

\begin{figure}
\centering\psfig{file=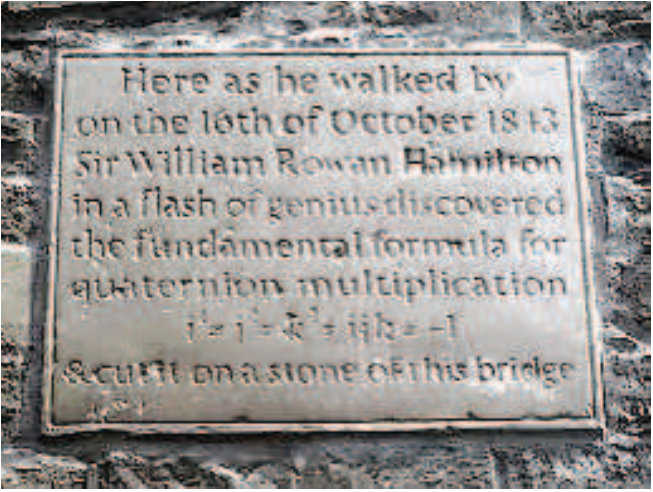,width=\linewidth, angle=0}
\caption{The quaternion plaque on Brougham Bridge, Dublin.
From wikipedia https://en.wikipedia.org/wiki/History\_of\_quaternions
}
\label{fig:quaternion}
\end{figure}

Hamilton employed quaternions to describe the 3D rotations.
Essentially he used the spin-$\frac{1}{2}$ spinor representation:
Consider a 3D rotation $R$ around the axis along the direction
of $\hat \Omega$ and the rotation angle is $\gamma$.
Define a unit imaginary quaternion,
\bea
\omega(\hat\Omega)=i\sin\theta\cos\phi+j\sin\theta\sin\phi
+k\cos\theta,
\eea
where $\theta$ and $\phi$ are the
polar and azimuthal angles of $\hat\Omega$.
Then a unit quaternion associated with such a rotation
is defined as
\bea
q=\cos\frac{\gamma}{2}+\omega(\hat\Omega)\sin\frac{\gamma}{2},
\eea
which is essentially an SU(2) matrix.
A 3D vector $\vec r$ is mapped to an imaginary quaternion
$r=xi+yj+zk$.
After the rotation, $\vec r$ is transformed to $\vec r'$,
and its quaternion form is
\bea
r'= q r q^{-1}.
\label{eq:3Drotation}
\eea
This expression defines the homomorphism from SU(2) to SO(3).
In fact, using quaternions to describe rotation is
more efficient than using the 3D orthogonal matrix,
hence, quaternions are widely used in computer graphics
and aerospace engineering even today.
If set $\vec r =\hat z$ in Eq. \ref{eq:3Drotation}, and
let $q$ run over unit quaternions, which span the $S^3$ sphere,
then a mapping from $S^3$ to $S^2$ is defined as
\bea
n=qkq^{-1},
\eea
which is the 1st Hopf map.

Hamilton spent the last 20 years of his life to promote
quaternions.
His ambition was to invent quaternion analysis which could
be as powerful as complex analysis.
Unfortunately, this was not successful because of the
non-commutative nature of quaternions.
Nevertheless, Fueter found the analogy to the Cauchy-Riemann
condition for quaternion analysis \cite{sudbery1979,frenkel2008}.
Consider a quaternionically valued function $f(x,y,z,u)$:
It is quaternionic analytic if it satisfies the following
Cauchy-Riemann-Fueter condition,
\bea
\frac{\partial f}{\partial x}
+i\frac{\partial f}{\partial y}
+j\frac{\partial f}{\partial z}
+k\frac{\partial f}{\partial u}
\label{eq:quater_ana}
=0.
\eea
Eq. \ref{eq:quater_ana} is the left-analyticity condition since
imaginary units are multiplied from the left.
A right-analyticity condition can also be similarly defined in which imaginary
units are multiplied from the right.
The left one is employed throughout this article
for consistency.
For a quaternionic analytic function, the analogy to the Cauchy integral is
\bea
\frac{1}{2\pi^2} \oiiint \frac{1}{|q-q_0|^2 (q-q_0)} Dq f(q)=f(q_0),
\label{eq:quater_cauchy}
\eea
where the integral is over a closed three-dimensional volume surrounding
$q_0$.
The measure of the volume element is,
\bea
D(q)&=&dy\wedge dz\wedge du-idx\wedge dz\wedge du \nn \\
&+& jdx\wedge dy\wedge du
-kdx\wedge dy\wedge dz,
\eea
and $K(q)$ is the four-dimensional Green's function,
\bea
K(q)=\frac{1}{q|q|^2}=\frac{x-yi-zj-uk}{(x^2+y^2+z^2+u^2)^2}.
\eea

There have also been considerable efforts in formulating quantum mechanics
and quantum field theory based on quaternions instead of complex numbers
\cite{finkelstein1962,adler1995}.
Quaternions are also used to formulate the Laughlin-like wavefunctions
in the 2D fractional quantum Hall physics \cite{balatsky1992}.

As discussed in {\it ``Selected Papers (1945-1980)
of Chen Ning Yang with Commentary"} \cite{yang2005}, C. N. Yang speculated that
quaternion quantum theory would be a major revolution to physics,
mostly based on the viewpoint of non-Abelian gauge theory.
He wrote, ``{\it... I continue to believe that the basic direction
is right.
There must be an explanation for the existence of SU(2) symmetry:
Nature, we have repeatedly learned, does not do random things at the
fundamental level.
Furthermore, the explanation is most likely in quaternion algebra:
its symmetry is exactly SU(2).
Besides, the quaternion algebra is a beautiful structure.
Yes, it is noncommutative.
But we have already learned that nature chose noncommutative algebra
as the language of quantum mechanics.
How could she resist using the only other possible nice algebra as
the language to start all the complex symmetries that she
built into the universe?"}

\section{Complex analyticity and two-dimensional Landau level}
\label{sect:2Dlandau}

In this section, I recapitulate the basic knowledge of the 2D Landau level problem,
including both the non-relativistic Schr\"odinger equation
in Sect. \ref{sect:2DSch} and the Dirac equation in Sect. \ref{sect:2DDirac}.
I explain the complex analyticity of the 2D lowest Landau level
wavefunctions.

\subsection{2D Landau level for non-relativistic electrons}
\label{sect:2DSch}

\begin{figure}
\centering\epsfig{file=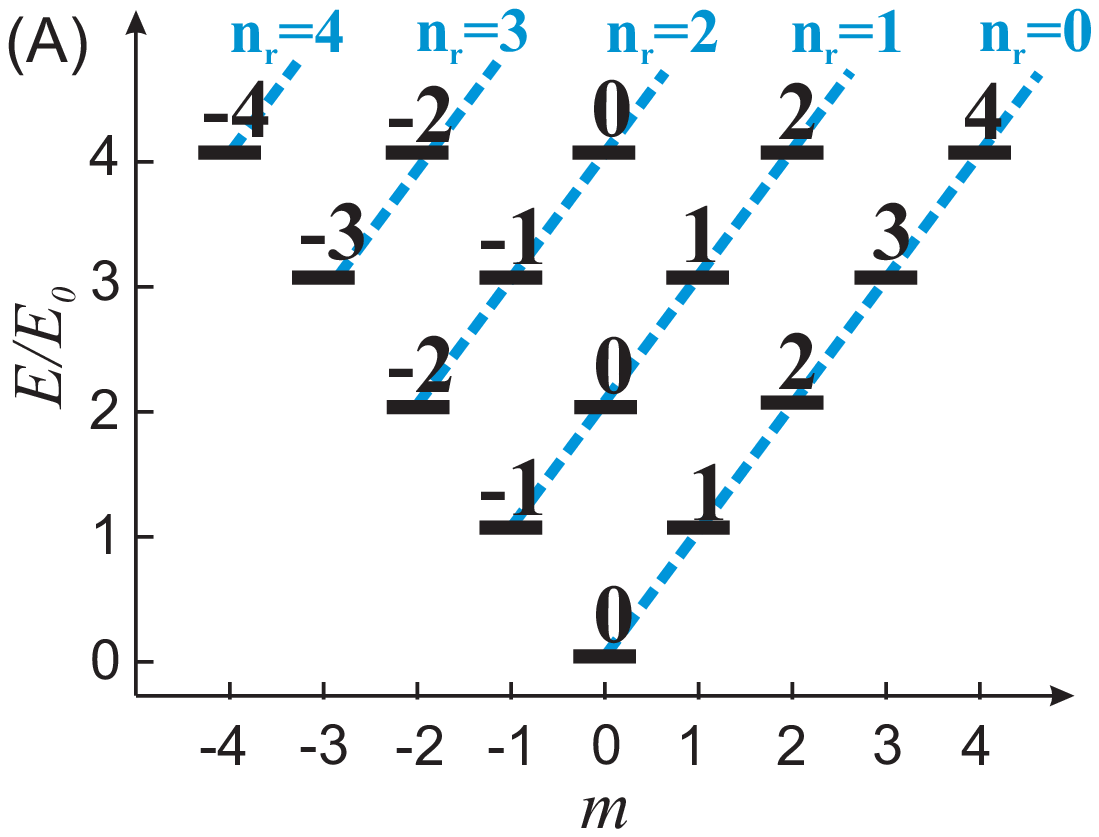,clip=1,width=0.8\linewidth,angle=0}
\hspace{10mm}
\centering\epsfig{file=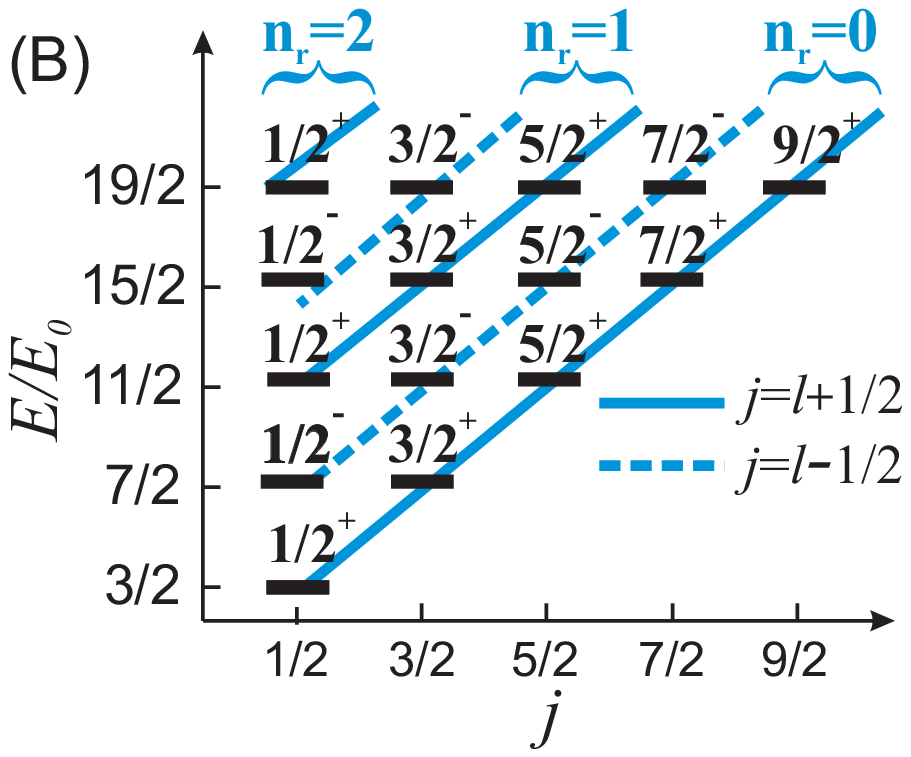,clip=1,width=0.7\linewidth,angle=0}
\caption{
A) The energy level diagram of 2D harmonic oscillators v.s. the magnetic
quantum number $m$. The states along
the tilted lines are reorganized into the 2D flat Landau levels.
B) The eigenstates of the 3D harmonic oscillator labeled by total angular
momentum $j_\pm=l\pm \frac{1}{2}$
Following the tilted solid (dashed) lines, these states are reorganized
into the 3D Landau level sates with the positive (negative) helicity
for $H^\pm_{3D,symm}$, respectively.
(From Ref.\cite{li2013}).
}
\label{fig:spectra}
\end{figure}

The reason that the 2D Landau level wavefunctions are so interesting
is that their elegancy.
The external magnetic field reorganizes harmonic oscillator wavefunctions
to yield analytic properties.
To be concrete, the Hamiltonian for a 2D electron moving in an
external magnetic field $B$ reads,
\bea
H_{2D,sym}=\frac{(\vec P -\frac{q}{c}\vec A)^2}{2M}.
\label{eq:2DLL}
\eea
In the symmetric gauge, i.e., $A_x=-\frac{1}{2}By$ and
$A_y=\frac{1}{2}Bx$, the 2D rotational symmetry is explicit.
The diamagnetic $A^2$-term gives rise to the harmonic potential, and
the cross term becomes the orbital Zeeman term.
Then Eq. \ref{eq:2DLL} can be reformulated as
\bea
H_{2D,sym}= \frac{P_x^2+P_y^2}{2M}+\frac{1}{2} M \omega_0^2 (x^2+y^2)
- \omega_0 L_z,
\label{eq:2D_symm}
\eea
where $\omega_0$ is half of the cyclotron frequency $\omega_c$;
$\omega_c=qB/(Mc)$ and $qB>0$ is assumed.
Eq. \ref{eq:2D_symm} can also be interpreted as the Hamiltonian of a
rotating 2D harmonic potential, which is how the Landau level
Hamiltonian is realized in cold atom systems.

Since these the harmonic potential and orbital Zeeman term commute
with each other, the Landau level wavefunctions are just
wavefunctions of 2D harmonic oscillators.
In Fig. \ref{fig:spectra} A), the spectra of the 2D harmonic
oscillator {\it v.s.} the magnetic quantum number $m$ are plotted,
exhibiting a linear dependence on $m$ as $E_{n_r,m}=\hbar \omega_0 (2n_r+m+1)$
where $n_r$ is the radial quantum number.
If we view this diagram horizontally, they are with finite degeneracies
exhibiting a trivial topology.
But if they are viewed along the diagonal direction, they become
Landau levels.
This reorganization is due to the orbital Zeeman term, which
also disperses linearly $E_{Z}=-m\hbar\omega_0$.
It cancels the same linear dispersion of the 2D harmonic oscillator,
such that the Landau level energies become flat.
The wavefunctions of the lowest Landau level states with $n_r=0$
are $\psi_{LLL,m}(z)=z^m e^{-|z|^2/(4l_B^2)}$ with $m\ge 0$, where
the magnetic length $l_B=\sqrt{\hbar c/(qB)}$.

Now we impose the complex analyticity, i.e., the Cauchy-Riemann condition,
to select a subset of harmonic oscillator wavefunctions.
Physically it is implemented by the magnetic field.
It just means that the cyclotron motion is chiral.
After suppressing the Gaussian factor, the lowest Landau level
wavefunction is simply,
\bea
\psi_{LLL}(z)=f(z),
\eea
which has a one-to-one correspondence to a complex analytic function.
In fact, the complex analyticity greatly facilitated the
construction of the many-body Laughlin wavefunctions \cite{laughlin1983},
\bea
\psi_L(z_1,...,z_n)=\Pi_{i<j} (z_i-z_j)^3
e^{-\sum_i\frac{|z_i|^2}{4l^2_b}},
\label{eq:laughlin}
\eea
which is actually analytic in terms of multi-complex variables.

Along the edge of a 2D Landau level system, the bulk flat states
change to 1D dispersive chiral edge modes.
They satisfy the chiral wave equation \cite{girvin1999},
\bea
\Big(\frac{1}{v_f}\frac{\partial }{\partial t} -
\frac{\partial }{\partial x}
\Big) \psi (x,t)=0,
\label{eq:chiraledge}
\eea
where $v_f$ is the Fermi velocity.

\subsection{2D Landau level for Dirac fermions}
\label{sect:2DDirac}
The is essentially a square-root problem of the Landau level
Hamiltonian of a Schr\"odinger fermion in Eq. \ref{eq:2DLL}.
The Hamiltonian reads \cite{semenoff1984},
\bea
H^{D}_{2D}=l_0\omega \Big\{(p_x - A_x) \sigma_x+(p_y -A_y) \sigma_y \Big\},
\eea
where $A_x=-\frac{1}{2}By$, $A_y=\frac{1}{2}Bx$,
$l_0=\sqrt{\frac{2\hbar c}{|qB|}}$, and
$\omega=\frac{|qB|}{2mc}$.
It can be recast in the form of
\bea
H_{2D}^{D}=\frac{ \hbar \omega}{\sqrt 2} \left[
\begin{array}{cc}
0& a_y^\dagger + ia_x^\dagger \\
a_y -i a_x & 0
\end{array}
\right],
\label{eq:2DLL_harm}
\eea
where $a_i=\frac{1}{\sqrt 2}(x_i/l_0+ip_i l_0/\hbar)$
$(i=x,y)$ are the phonon annihilation operators.

The square of Eq. \ref{eq:2DLL_harm} is reduced to the Landau level
Hamiltonian of a Schr\"odinger fermion with a supersymmetric structure
as
\bea
(H_{2D}^{D})^2/(\frac{1}{2}\hbar \omega)=\left[
\begin{array}{cc}
H_{2D,sym}-\frac{1}{2}\hbar\omega&0\\
0& H_{2D,sym}+\frac{1}{2}\hbar\omega
\end{array} \right], \nn
\\
\eea
where $H_{2D,sym}$ is given in Eq. \ref{eq:2D_symm}.
The spectra of Eq. \ref{eq:2DLL_harm} are $E_{\pm n}=\pm \sqrt n\hbar\omega$
where $n$ is the Landau level index.
The zeroth Landau level states are singled out:
Only the upper component of their wavefunctions is nonzero,
\bea
\Psi_{2D,LLL}^D(z)=\left( \begin{array}{c}
\psi_{LLL}(z)\\
0
\end{array}
\right).
\eea
$\psi_{LLL}(z)$ is the 2D lowest Landau level wavefunctions of the Schr\"odinger equation, which is complex analytic.
Other Landau levels with positive and negative energies distribute symmetrically around the zero energy.

Due to the particle-hole symmetry, each state of the zeroth Landau level
is a half-fermion Jackiw-Rebbi mode \cite{jackiw1976,heeger1988}.
When the chemical potential $\mu$ approaches $0^\pm$,
the zeroth Landau level is fully occupied, or, empty, respectively.
The corresponding electromagnetic response is,
\bea
j_\mu=\pm\frac{1}{8\pi}\frac{q^2}{\hbar}
\epsilon_{\mu\nu\lambda} F_{\nu\lambda},
\label{eq:panomaly}
\eea
known as the 2D parity anomaly
\cite{redlich1984,redlich1984a,semenoff1984,niemi1986},
where $\pm$ refer to $\mu=0^\pm$, respectively.
The two spatial components of Eq. \ref{eq:panomaly} are just the half-quantized quantum Hall conductance, and the temporal component is the half-quantized
Streda formula \cite{streda1982}.

\section{3D Landau level and quaternionic analyticity}
\label{sect:3DLL}
We have seen the close connection between complex analyticity and
2D topological states.
In this section, we discuss how to construct high-dimensional
topological states in flat spaces based on
quaternionic analyticity.

\subsection{3D Landau level Hamiltonian}
Our strategy is based on high-dimensional harmonic oscillator
wavefunctions.
Again we need to select a subset of them for non-trivial topological
properties:
{\it The selection criterion is quaternionic analyticity,
and physically it is imposed by spin-orbit coupling.}
The physical picture of the 3D Landau level wavefunctions in the symmetric-like gauge is intuitively presented in
Fig. \ref{fig:3DLL} (A).
It generalizes the fixed complex plane in the 2D Landau level
problem to a moving frame embedded in 3D.
Define a frame with the orthogonal axes $\hat e_1$, $\hat e_2$,
and $\hat e_3$, and the complex analytic wavefunctions are defined
in the $\hat e_1$-$\hat e_2$ plane with spin polarized along the
$\hat e_3$ direction.
Certainly this frame can be rotated to an arbitrary configuration.
The same strategy can be applied to any high dimensions.

\begin{figure}
\centering\epsfig{file=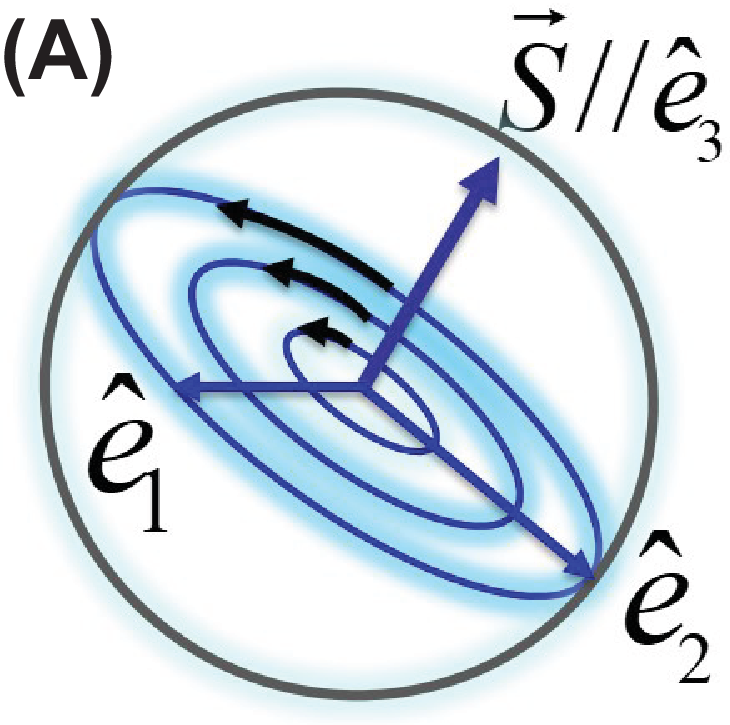,clip=1,width=0.5\linewidth,angle=0}
\hspace{20mm}
\centering\epsfig{file=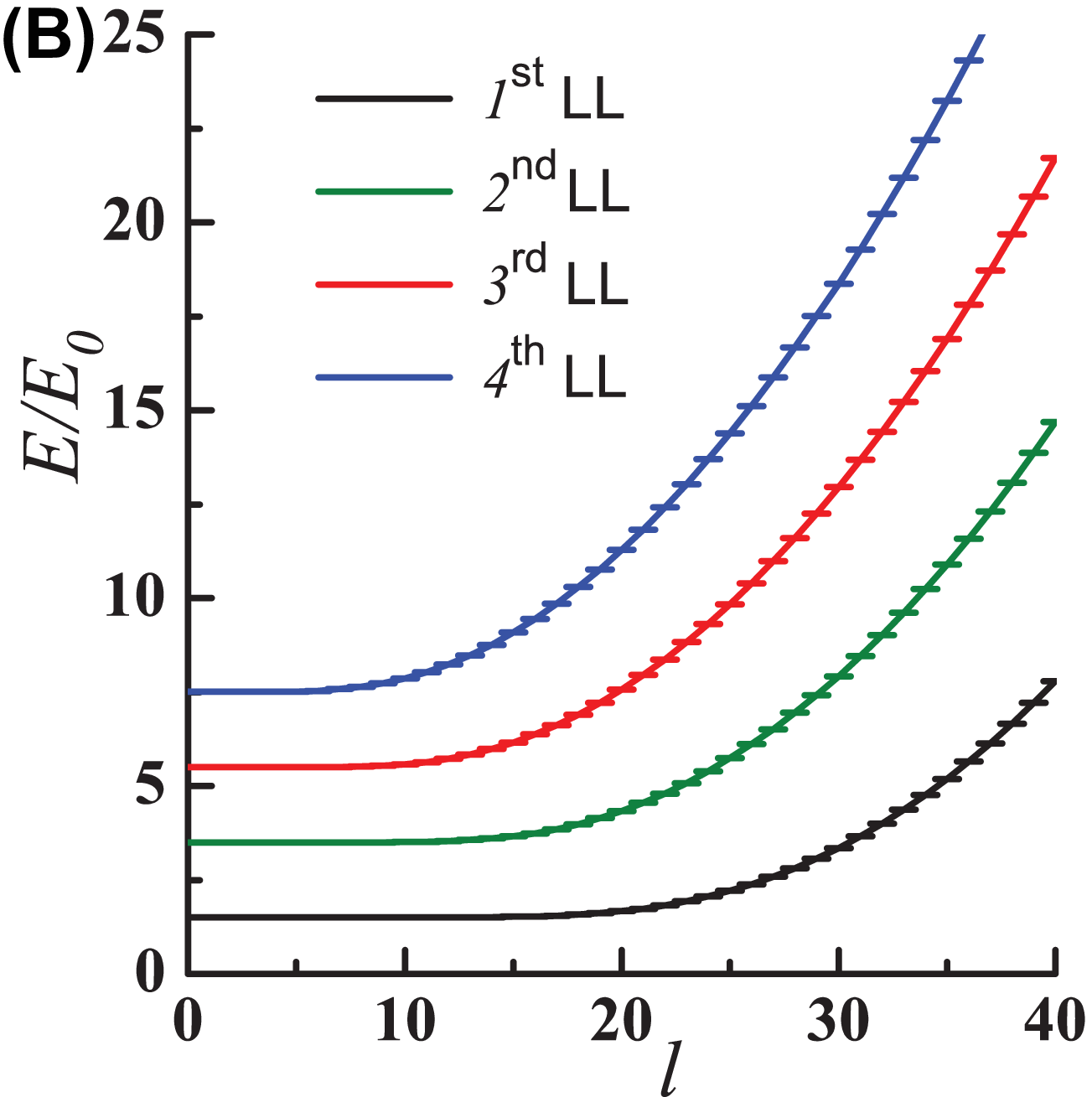,clip=1,width=0.6\linewidth,angle=0}
\caption{
A) The coherent state picture for 3D lowest Landau level
wavefunctions based on Eq. \ref{eq:3Dcoherent}.
$\hat e_1$-$\hat e_2$-$\hat e_3$ form an orthogonal triad.
The lowest Landau level wavefunction is
complex analytic in the orbital plane $\hat e_1$-$\hat e_2$
and spin is polarized along $\hat e_3$.
B) The surface spectra for the 3D Landau level Hamiltonian Eq.\ref{eq:3D_symm}.
The open boundary condition is used for a ball with the radius $R_0/l_{so}=8$.
(From Ref. \cite{li2013}.)
}
\label{fig:3DLL}
\end{figure}

Now we present the 3D Landau level Hamiltonian as constructed
in Ref. \cite{li2013}.
Consider to couple a spin-$\frac{1}{2}$ fermion to the 3D
isotropic SU(2) Aharanov-Casher potential $\vec A=\frac{G}{2} \vec \sigma
\times \vec r$ where $G$ is the coupling constant and $\vec\sigma$'s
are the Pauli matrices.
The resultant Hamiltonian is
\bea
H^{\pm}_{3D,sym}&=&\frac{1}{2M}
\big( \vec P - \frac{q}{c} \vec A (\vec{r})\big )^{2}
+V(\vec r) \nn \\
&=&\frac{P^2}{2M}+ \frac{1}{2}M\omega _{0}^{2} r^2
\mp \omega _{0}\vec{\sigma}\cdot \vec{L},
\label{eq:3D_symm}
\eea
where $\pm$  refer to $G>0 ~ (<0)$, respectively; $\omega_{0}=\frac{1}{2}\omega_{so}$ and $\omega_{so}
=|qG|/(Mc)$ is the analogy of the cyclotron frequency.
$V(r)=-\frac{1}{2}M\omega _{0}^{2}r^{2}$,
nevertheless, the $\frac{1}{2M}(\frac{q}{c})^2 A^2(r)$ term in the
kinetic energy contributes a quadratic scalar potential which
equals $2|V(r)|$, hence, Eq. \ref{eq:3D_symm} is still bound from below.
In contrast to the 2D case, $H^\pm_{3D,sym}$ preserve
time-reversal symmetry.
It can also be formulated as a 3D harmonic potential plus a
spin-orbit coupling term.
Again since these two terms commute, the 3D Landau level
wavefunctions are just the eigenstates of a 3D harmonic oscillator.

Consider the eigenstates of a 3D harmonic oscillator with an additional
spin degeneracy $\uparrow$ and $\downarrow$.
For later convenience, their eigenstates are organized into the bases of the total angular momentum $j_\pm=l\pm\frac{1}{2}$, where $\pm$ represent the positive and negative helicities, respectively.
The corresponding spectra are plotted in Fig. \ref{fig:spectra} (B),
showing a linear dispersion with respect to $l$ as
$E_{n_r,J_\pm=l\pm\frac{1}{2},J_z}=\hbar \omega_0
(2 n_r + l +\frac{3}{2})$.

Again, if we view the spectra along the diagonal direction,
the novel topology appears.
The spin-orbit coupling term $\vec \sigma \cdot \vec k$ has two branches of eigenvalues, both of which disperse linearly with $l$ as $l\hbar$ and $-(l+1)\hbar$
for the positive and negative helicity sectors, respectively.
Combining the harmonic potential and spin-orbit coupling,
we arrive at the flat Landau levels:
For $H^+_{3D}$, the positive helicity states become dispersionless
with respect to $j_+$ , a main feature of Landau levels.
Similarly, the negative helicity states become flat for $H^-_{3D}$.
States in the 3D Landau level show the same helicity.

\subsection{The SU(2) group manifold for the lowest Landau level
wavefunctions}
Having understood why the spectra are flat, now we provide
an intuitive picture for the lowest Landau level wavefunctions
with the positive helicity.
If expressed in the orthonormal basis of $(j\pm,j_z)$, they are
rather complicated,
\bea
\psi_{LLL, j_+=l+\frac{1}{2}, j_z}(r,\hat\Omega)=r^l
Y_{j_+=l+\frac{1}{2}, j_z}(\hat \Omega)
e^{-\frac{r^{2}}{4l_{so}^{2}}},
\label{eq:LL_orthonomal}
\eea
where $l_{so}=\sqrt{\hbar c/|qG|}$ is the analogy of the magnetic
length and $Y_{j_+=l+\frac{1}{2}, j_z}(\hat \Omega)$ is the
spin-orbit coupled spherical harmonic function.

Instead, they become very intuitive in the coherent state representation.
Let us start with the highest weight states with $j_+=j_z$, whose
wavefunctions are $\psi_{LLL, j_+=j_z}(r,\hat\Omega)=(x+iy)^l \exp\{-\frac{r^{2}}{4l_{so}^{2}}\} \otimes |\uparrow\rangle$.
Their spins are polarized along the $z$-direction and
orbital parts are complex analytic in the $xy$ plane.
We then perform a general SU(2) rotation such that the $xyz$-frame is rotated to the frame of $\hat e_{1}$-$\hat e_2$-$\hat e_3$.
For a coordinate vector $\vec r$, its projection in the $\hat e_1$-$\hat e_2$ plane forms a complex variable $\vec r \cdot (\hat e_1 +i \hat e_2)$
based on which we construct complex analytic functions.
Now it is clear why spin-orbit coupling is essential.
Otherwise, if the plane is flipped, then the complex variable changes to
its conjugate, and the complex analyticity is lost.
Nevertheless, since spin is polarized perpendicular to the $\hat e_1$-$\hat e_2$-plane, spin also flips during the flipping of the orbital plane, such
 that the helicity remains invariant.
In general, we can perform an arbitrary $SU(2)$ rotation on
the highest weight states and arrive at a set of coherent states
forming the over-complete bases of the lowest Landau level states as
\bea
\psi_{LLL,\hat e_{1,2,3}, j_+}(r, \hat \Omega)=[(\hat e_1 +i \hat e_2)
\cdot \vec r]^l  e^{-\frac{r^{2}}{4l_{so}^{2}}}
\otimes \ket{\alpha_{\hat e_3}},   (l\ge 0)
\nn \\
\label{eq:3Dcoherent}
\eea
where $(\hat e_3 \cdot \vec \sigma)
\ket{\alpha_{\hat e_3}}= \ket{\alpha_{\hat e_3}}$.

Now we can make a comparison among harmonic oscillator wavefunctions in different dimensions.
\begin{enumerate}
\item
In 1D, we only have the real Hermite polynomials.
\item
In 2D, a subset of harmonic wavefunctions $z^m$ (lowest Landau level) are selected exhibiting the $U(1)$ structure.
\item
In 3D, the complex plane $\hat e_1$-$\hat e_2$ associated with the
frame $\hat e_1$-$\hat e_2$-$\hat e_3$ are floating.
This is similar to the rigid-body configuration.
In other words, the configuration space of the 3D lowest Landau level
states is that of a triad, or, the $SU(2)$ group manifold.
\end{enumerate}
Since the SU(2) group manifold is isomorphic to the space of unit quaternions,
this motivates us to consider the analytic structure in terms of
quaternions, which will be presented in Sect. \ref{sect:analyticity}.

\subsection{The off-centered solutions to the lowest Landau level states}
\label{sect:off-center}

Different from the 2D Landau level Hamiltonian, which possesses the magnetic
translation symmetry, the 3D one of Eq. \ref{eq:3D_symm}
does not possess such a symmetry due to the non-Abelian nature of the
SU(2) gauge potential.
Nevertheless, based on the coherent states described by Eq. \ref{eq:3Dcoherent},
we can define magnetic translations within the
$\hat e_1$-$\hat e_2$ plane, and organize the off-centered
solutions in the lowest Landau level.

Consider all the coherent states in the $\hat e_1$-$\hat e_2$ plane
described by Eq. \ref{eq:3Dcoherent}.
We define the magnetic translation for this set of states as
\bea
T_{\hat{e}_{3}}(\vec{\delta})=\exp [-\vec \delta \cdot \vec{\nabla}+
\frac{i}{4l_{so}^{2}}
~\vec r_{12} \cdot (\hat e_3 \times \vec{\delta})],
\label{eq:tran}
\eea
where the translation vector $\vec{\delta}$ lies in the $\hat{e}_{1,2}$-plane
and $\vec{r}_{12}= \vec{r}-\hat{e}_{3}(\vec{r}\cdot \hat{e}_{3})$.
Set $\hat e_1=\hat z$, and the normal vector $\hat e_3$ lying in the
$xy$-plane with an azimuthal angle $\phi'$,
\textit{i.e.}, $\hat e_3(\phi')=\hat x\cos \phi' +\hat y\sin
\phi'$, then
$\alpha_{\hat e_3}(\phi')= \frac{1}{\sqrt 2}
(|\uparrow \rangle +e^{i\phi'} |\downarrow\rangle)$.
Consider the lowest Landau level states localized at the origin,
\bea
\psi_{l=0,\hat e_3}(r, \hat \Omega)=e^{-\frac{r^{2}}{4l_{so}^{2}}}
\otimes \ket{\alpha_{\hat e_3}},
\eea
and translate it along $\hat z$ at the distance $R$.
According to Eq. \ref{eq:tran}, we arrive at
\bea
\psi_{\phi',R} (\rho,\phi,z) =e^{i\frac{1}{2l_{so}^2} R\rho \sin (\phi-\phi')}
e^{-|\vec r- R \hat z|^2/4l_{so}^2}\otimes \alpha_{\hat e_3}(\phi'),
~
\label{eq:off-center}
\eea
where $\rho=\sqrt{%
x^2+y^2}$ and $\phi$ is the azimuthal angular of $\vec r$
in the $xy$-plane.

Now we can restore the rotational symmetry around the $\hat z$-axis by
performing the Fourier transform with respect to the angle $\phi'$, i.e.,
$\psi _{j_{z}=m+\frac{1}{2},R}(\rho,\phi,z)=\int_{0}^{2\pi }\frac{d\phi'
}{2\pi }e^{im\phi' }\psi _{\phi' ,R}$.
We arrive at the eigenstates of $j_z$ as
\bea
\psi _{j_{z}=m+\frac{1}{2},R}(\rho,\phi,z)
&=& e^{\frac{-|\vec{r}-R\hat{z}|^{2}}{4l_{so}^{2}}}
e^{im\phi }\Big\{J_{m}(x)|\uparrow \rangle \nn \\
&+&J_{m+1}(x)e^{i\phi }|\downarrow \rangle \Big\},
\label{eq:fourier}
\eea
where $x=R\rho /(2l_{so}^{2})$.
It describes a wavefunction with the shape of
an ellipsoid, whose distribution in the $xy$-plane is within the
distance of $ml_{so}^2/R$.
The narrowest states $\psi_{\pm \frac{1}{2},R}$ have an aspect
ratio scaling as $l_{so}/R$ when $R$ goes large.
On the other hand, for those states with $|m|<R/l_{so}$, they localize
within the distance of $l_{so}$ from the center located at $R\hat{z}$.
As a result, the real space local density of states of the lowest
Landau level grows linearly with $R$.

\subsection{Quaternionic analyticity of the lowest Landau
level wavefunctions}
\label{sect:analyticity}

In analogy to complex analyticity of the 2D lowest Landau level states,
we have found that the helicity structure of the 3D lowest Landau levels
leads to quaternionic analyticity.

Just like two real numbers forming a complex number, a two-component
complex spinor $\psi=(\psi_\uparrow, \psi_\downarrow)^T$ can be mapped
to a quaternion by multiplying a $j$ to the 2nd component
\bea
f=\psi_\uparrow + j \psi_\downarrow.
\label{eq:map}
\eea
Then the familiar symmetry transformations can be represented via
multiplying quaternions.
The time-reversal transformation $i\sigma_2 \psi^*$ becomes $T f=-fj$ satisfying $T^2=-1$.
The $U(1)$ phase $e^{i\theta}\to f e^{i\theta}$,
and the SU(2) rotation becomes
\bea
e^{i\frac{\phi}{2}\sigma_x} \psi \to e^{k\frac{\phi}{2}} f, \ \ \,
e^{i\frac{\phi}{2}\sigma_y} \psi \to e^{j\frac{\phi}{2}} f, \ \ \,
e^{i\frac{\phi}{2}\sigma_z} \psi \to e^{-i\frac{\phi}{2}} f.
\nn \\
\eea

To apply the Cauchy-Riemann-Fueter condition Eq. \ref{eq:quater_cauchy}
to 3D, we simply suppress the 4th coordinate,
\bea
\frac{\partial f}{\partial x}
+i\frac{\partial f}{\partial y}
+j\frac{\partial f}{\partial z}
\label{eq:quater_ana_3D}
=0.
\eea

We prove a remarkable property below
that this condition (Eq. \ref{eq:quater_ana_3D}) is rotationally invariant.
\begin{lemma}
If a quaternionic wavefunction $f(x,y,z)$ is quaternionic analytic,
i.e., it satisfies the Cauchy-Riemann-Futer condition,
then after an arbitrary rotation, the consequential wavefunction $f^\prime(x,y,z)$ remains quaternionic analytic.
\end{lemma}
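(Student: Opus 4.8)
The plan is to establish not mere invariance but \emph{covariance}: a spatial rotation $R\in SO(3)$ acts on a quaternionic field by $f(\vec r)\mapsto f'(\vec r)=q\,f(R^{-1}\vec r)$, where left multiplication by the unit quaternion $q$ is precisely the spin-$\tfrac{1}{2}$ rotation already identified above (for instance $e^{i\frac{\phi}{2}\sigma_z}\psi\to e^{-i\frac{\phi}{2}}f$, and its cyclic partners). Since every element of $SO(3)$ is a product of rotations about the three coordinate axes, it suffices to prove the statement for those three one-parameter subgroups and then compose. I would therefore work infinitesimally, writing each elementary rotation through its generator $\mathcal G_a=\mathcal O_a+\mathcal S_a$: an orbital piece $\mathcal O_a\sim\epsilon_{abc}x_b\partial_c$ generating the coordinate rotation (its sign fixed by matching to the spin convention), and a spin piece $\mathcal S_a$ acting by left multiplication with the constant imaginary quaternions $\tfrac{1}{2}k,\ \tfrac{1}{2}j,\ -\tfrac{1}{2}i$ for $a=x,y,z$, read off from the correspondences above.

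The heart of the argument is the commutator identity $[\mathcal G_a,D]=c_a\,D$, where $D=\partial_x+i\partial_y+j\partial_z$ and the right-hand side is a \emph{pure left multiple} of $D$ (in these conventions $c_x=0$, $c_y=j$, $c_z=-i$). The mechanism is transparent: $\mathcal O_a$ alone fails to commute with $D$ because it rotates the gradient $(\partial_x,\partial_y,\partial_z)$ as a vector while leaving the coefficient units $(1,i,j)$ fixed, producing a residue built from the $[\mathcal O_a,\partial_d]$; the spin piece $\mathcal S_a$ then supplies exactly the complementary rotation of the coefficient units through the non-commutative products $[\epsilon_d,\,\cdot\,]$ of quaternion imaginary units, and the two residues recombine into an overall left factor times $D$. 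I expect this to be the main obstacle, since it is where the non-commutative bookkeeping is unforgiving: the orbital and spin conventions (including the sign of $\mathcal O_a$ and which quaternion unit pairs with which axis) must be aligned so that the cancellation is exact. A good way to fix all signs at once is to check the $x$-axis case first, where one finds the stronger statement $[\mathcal G_x,D]=0$; here the real coefficient sits on $x$, and a rotation about $x$ merely rotates the $i$- and $j$-directions in lockstep with the spin, so $D$ is genuinely invariant rather than only covariant.

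Finally I would upgrade covariance to invariance of the analytic subspace. Because $c_a$ commutes with $\mathcal G_a$ (left multiplications by commuting imaginary quaternions, and $\mathcal O_a$ commutes with every left multiplication), the identity $[\mathcal G_a,D]=c_a D$ integrates to $D\,e^{\phi\mathcal G_a}=e^{-\phi c_a}\,e^{\phi\mathcal G_a}\,D$, again as left multiplication. Hence if $f$ is quaternionic analytic, then $Df'=D\,e^{\phi\mathcal G_a}f=e^{-\phi c_a}\,e^{\phi\mathcal G_a}(Df)=0$, so each axis rotation maps analytic functions to analytic functions; composing the three axis rotations through Euler angles gives the result for an arbitrary $R$, which proves the lemma. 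As an independent consistency check I would confirm that this abstract argument reproduces the explicit coherent-state lowest Landau level wavefunctions of Eq.~\ref{eq:3Dcoherent}, whose rotational family is manifestly closed.
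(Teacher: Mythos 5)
Your proposal is correct, and it reaches the lemma by a genuinely more careful route than the paper does. The paper also decomposes an arbitrary rotation into elementary ones (Euler angles $z$-$y$-$z$) and also relies on the spin left-multiplication compensating the orbital rotation of the gradient; but its key step is a \emph{finite} intertwining identity, asserted with ``it can be checked that'': $\bigl(\partial_x+i\partial_y+j\partial_z\bigr)\,g = g\,\bigl(\partial_{x'}+i\partial_{y'}+j\partial_{z'}\bigr)$, i.e.\ $D\,L_g=L_g\,D'$ with \emph{no} extra factor. Your infinitesimal computation shows this cannot be literally exact for general rotations: exactness would require the conjugation $q\mapsto g^{-1}qg$ to carry the triple of units $(1,i,j)$ into its rotated linear combinations, which fails because conjugation fixes the real unit $1$ while the rotation mixes $x$ with $y,z$ --- precisely your observation that only the $x$-axis rotation gives $[\mathcal G_x,D]=0$, whereas $[\mathcal G_y,D]=jD$ and $[\mathcal G_z,D]=-iD$ (I have checked these three commutators in your conventions, with spin generators $k/2,\ j/2,\ -i/2$ as follows from $f=\psi_\uparrow+j\psi_\downarrow$, and they are correct). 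So the honest statement is covariance, $D\,e^{\phi\mathcal G_a}=e^{-\phi c_a}e^{\phi\mathcal G_a}D$, with an invertible constant left factor $e^{-\phi c_a}$; since that factor kills nothing, the kernel of $D$ is still preserved, and composing the axis rotations finishes the proof exactly as you say. What the two routes buy: the paper's finite-rotation argument is shorter but leaves its central identity unverified (and, read literally, overstated); your generator-level argument derives the key identity, makes the required sign pairing between orbital and spin parts explicit, isolates where non-commutativity matters, and explains why the intertwining is only projective --- which is the precise mechanism behind the paper's claim. One small point of care when writing it up: state explicitly that the group action property $U(R_1R_2)=\pm U(R_1)U(R_2)$ (the sign ambiguity of the SU(2) lift being harmless here) is what lets the axis-rotation result propagate to all of $SO(3)$.
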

\begin{proof}
Consider an arbitrary SU(2) rotation $g(\alpha,\beta,\gamma)=e^{-i\frac{\alpha}{2}\sigma_z}
e^{-i\frac{\beta}{2}\sigma_y} e^{-i\frac{\gamma}{2}\sigma_z}$,
where $\alpha, \beta,\gamma$ are Eulerian angles.
In the quaternion representation, it maps to
$g=e^{i\frac{\alpha}{2}} e^{-j\frac{\beta}{2}}
e^{i\frac{\gamma}{2}}$.
After this rotation $f(x,y,z)$ transforms to
\bea
f^{\prime}(x,y,z)
=e^{i\frac{\alpha}{2}} e^{-j\frac{\beta}{2}}
e^{i\frac{\gamma}{2}} f( x^\prime, y^\prime, z^\prime),
\label{eq:rotation}
\eea
where  $(x^\prime,y^\prime$, $z^\prime)$ are the coordinates
by applying $g^{-1}$ on $(x,y,z)$.
It can be checked that
\bea
&&\Big(\frac{\partial}{\partial x}+ i\frac{\partial}{\partial y}
+j\frac{\partial}{\partial z} \Big)
e^{i\frac{\alpha}{2}} e^{-j\frac{\beta}{2}}
e^{i\frac{\gamma}{2}}\nn \\
&=&e^{i\frac{\alpha}{2}} e^{-j\frac{\beta}{2}} e^{i\frac{\gamma}{2}}
\Big(\frac{\partial}{\partial x'} +i\frac{\partial}{\partial y'}
+j\frac{\partial}{\partial z'} \Big).
\eea
Then we have
\bea
(\frac{\partial}{\partial x}+ i\frac{\partial}{\partial y}
+j\frac{\partial}{\partial z})f'(x,y,z)=0.
\eea
Hence, the Cauchy-Riemann-Fueter condition is rotationally
invariant.
\end{proof}

Based on this lemma, we prove the quaternionic analyticity of
the 3D lowest Landau level wavefunctions.
\begin{theorem}
The 3D lowest Landau level wavefunctions of $H^+_{3D,sym}$ in Eq. \ref{eq:3D_symm}
have a one-to-one correspondence to the quaternionic analytic polynomials
in 3D.
\end{theorem}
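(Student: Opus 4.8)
The plan is to establish the correspondence degree by degree, matching the homogeneous pieces on the two sides. First I would strip off the common Gaussian factor $e^{-r^{2}/(4l_{so}^{2})}$, which is rotationally invariant and shared by every lowest Landau level state, so that each positive-helicity eigenstate of $H^{+}_{3D,sym}$ in Eq. \ref{eq:3D_symm} is represented by the polynomial prefactor of its two-component spinor, and then by a quaternion-valued polynomial through the map in Eq. \ref{eq:map}. The starting point is the highest-weight state at level $l$, namely $(x+iy)^{l}\ket{\uparrow}$, whose quaternion image is simply $f=(x+iy)^{l}$. A one-line computation $(\partial_{x}+i\partial_{y}+j\partial_{z})f = l(x+iy)^{l-1}+i\,l(x+iy)^{l-1}i = l(x+iy)^{l-1}(1+i^{2})=0$ (the quaternion unit $i$ commutes with the complex monomial, and there is no $z$-dependence) shows that it satisfies the Cauchy-Riemann-Fueter condition Eq. \ref{eq:quater_ana_3D}. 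Note that the positive-helicity choice $H^{+}$ is exactly what matches the left-analyticity convention of Eq. \ref{eq:quater_ana_3D}.

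Next I would propagate analyticity to the whole lowest Landau level. The coherent states of Eq. \ref{eq:3Dcoherent} are obtained from the highest-weight state by a locked spin-orbit $SU(2)$ rotation, which under the quaternion map is precisely the operation $f\mapsto g\,f(x',y',z')$ of Eq. \ref{eq:rotation}; by the Lemma just proven, every coherent state is therefore quaternionic analytic. Since the coherent states span the lowest Landau level and the Cauchy-Riemann-Fueter condition is a closed real-linear constraint, every real-linear combination and every rotational Fourier integral of coherent states --- in particular the angular-momentum eigenstates $\psi_{LLL,j_{+}=l+\frac12,j_{z}}$, whose orbital parts are solid harmonics homogeneous of degree $l$ --- is again analytic. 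This yields the forward inclusion: the degree-$l$ sector of the lowest Landau level maps injectively (the quaternion map is a pointwise bijection) into the space of homogeneous quaternionic analytic polynomials of degree $l$.

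To close the correspondence I would compare dimensions at each degree $l$. On the physics side the positive-helicity sector with $j_{+}=l+\frac12$ contains $2l+2$ complex spinor states, i.e. real dimension $4(l+1)$. On the analysis side I would compute the kernel of $\mathcal{D}=\partial_{x}+i\partial_{y}+j\partial_{z}$ acting on quaternion-valued homogeneous polynomials $\mathcal{P}_{l}$. The key algebraic fact is the factorization $\bar{\mathcal{D}}\mathcal{D}=\mathcal{D}\bar{\mathcal{D}}=\Delta$ with $\bar{\mathcal{D}}=\partial_{x}-i\partial_{y}-j\partial_{z}$, which holds because the mixed second derivatives cancel by the anticommutation $ij=-ji$. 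Since the scalar Laplacian is surjective from $\mathcal{P}_{l+1}$ onto $\mathcal{P}_{l-1}$, writing any $h\in\mathcal{P}_{l-1}$ as $h=\Delta u=\mathcal{D}(\bar{\mathcal{D}}u)$ with $\bar{\mathcal{D}}u\in\mathcal{P}_{l}$ shows $\mathcal{D}\colon\mathcal{P}_{l}\to\mathcal{P}_{l-1}$ is onto; rank-nullity then gives $\dim\ker=4\binom{l+2}{2}-4\binom{l+1}{2}=4(l+1)$, matching the physics count. An injection between real vector spaces of equal dimension is an isomorphism, so the forward inclusion is onto at each degree, and summing over $l\ge0$ assembles the degree-by-degree isomorphisms into the claimed one-to-one correspondence.

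The main obstacle I anticipate is this surjectivity step: verifying that the Cauchy-Riemann-Fueter conditions are independent, so that the analytic polynomial space is no larger than $4(l+1)$. The factorization $\mathcal{D}\bar{\mathcal{D}}=\Delta$ is the crucial lever, reducing the problem to the classical surjectivity of the Laplacian on homogeneous polynomials. The remaining care is bookkeeping about the grading --- confirming that $\mathcal{D}$ sends $\mathcal{P}_{l}$ into $\mathcal{P}_{l-1}$ and that the lowest Landau level decomposes as the direct sum of its degree-$l$ sectors --- so that the separate isomorphisms glue into a single graded bijection between the full lowest Landau level and the full ring of quaternionic analytic polynomials.
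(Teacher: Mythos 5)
Your proposal is correct, and its first half is exactly the paper's argument: verify the Cauchy--Riemann--Fueter condition on the highest-weight states $(x+iy)^l$ by direct computation, propagate it to all coherent states via the rotation Lemma, and extend to the full lowest Landau level by linear spanning (complex coefficients act as right quaternion multiplications, which preserve the left-analyticity condition). Where you genuinely depart from the paper is the completeness step. The paper counts quaternionic dimensions: it asserts that the number of independent degree-$l$ left-analytic polynomials is $C^2_{l+2}-C^2_{l+1}=l+1$ (monomial degrees of freedom minus number of constraints), and matches this against the Landau level sector by noting that time reversal acts as right multiplication by $j$, so the $2l+2$ complex states with $j_+=l+\frac{1}{2}$ organize into $l+1$ basis elements $f^{LLL}_{j_+,j_z}$ with right quaternion coefficients $q_m$ as in Eq. \ref{eq:cmplt}. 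That subtraction tacitly assumes the constraints are independent, i.e., that $\mathcal{D}=\partial_x+i\partial_y+j\partial_z$ maps degree-$l$ quaternion-valued polynomials \emph{onto} degree-$(l-1)$ ones. Your factorization $\bar{\mathcal{D}}\mathcal{D}=\mathcal{D}\bar{\mathcal{D}}=\Delta$ (valid because $ij=-ji$ kills the cross terms), together with the classical surjectivity of the Laplacian on homogeneous polynomials, proves exactly this independence, and your real-dimension bookkeeping ($4(l+1)$ on both sides) then closes the bijection without invoking the module structure. So your route is more self-contained and fills in a step the paper leaves implicit, at the cost of bypassing the elegant observation that the lowest Landau level is a free right quaternion module of rank $l+1$; both arguments are valid and yield the same theorem.
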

\begin{proof}
We denote the quaternionic polynomials, which correspond to the
orthonormal bases of the lowest Landau level wavefunctions
in Eq. \ref{eq:LL_orthonomal}, as $f^{LLL}_{j_+,j_z}$ with
$j_+=l+\frac{1}{2}$, and $-j_+\le j_z \le j_+$.
The highest weight states $f^{LLL}_{j_+,j_+}=(x+iy)^l$ are
complex analytic in the $xy$-plane, hence, it is obviously
quaternionic analytic.
Since all the coherent states can be obtained from the highest weight
states via rotations, they are also quaternionic analytic.
The coherent states form a set of overcomplete basis of the lowest Landau
level wavefunctions, hence all the lowest Landau level wavefunctions
are quaternionic analytic.

Next we prove the completeness that $f^{LLL}_{j_+,j_z}$'s
form the complete basis of the quaternoinic analytic polynomials
in 3D.
By counting the degrees of freedom of the $l$-th order polynomials
of $x,y,z$, and the number of the constraints from Eq. \ref{eq:quater_cauchy},
we calculate the total number of the linearly independent
$l$-th order quaternionic analytic polynomials
as $C^2_{l+2}-C^2_{l+1}=l+1$.
On the other hand, any lowest Landau level state in the sector of $j_+=l+\frac{1}{2}$ can be represented as
\bea
f_l(x,y,z) =\sum_{m=0}^{l} f^{LLL}_{j_+=l+\frac{1}{2},j_z=m+\frac{1}{2}} q_m,
\label{eq:cmplt}
\eea
where $q_m$ is a quaternion constant coefficient.
Please note that $q_{lm}$'s are multiplied from right
due to the non-commutativity of quaternions.
In Eq. \ref{eq:cmplt}, we have taken into account the fact
$f^{LLL}_{j_+,-j_z}=-f^{LLL}_{j_+,-j_z}j$ due to the time-reversal
transformation.
Hence, the degrees of freedom in the lowest Landau level with $j_+=l+\frac{1}{2}$ is also $l+1$.
Hence, the lowest Landau level wavefunctions are complete
for quaternionic analytic polynomials.
\end{proof}

\subsection{Generalizations to 4D and above}
The above procedure can be straightforwardly generalized to four and
even higher dimensions.
To proceed, we need to employ the Clifford algebra $\Gamma$-matrices.
Their ranks in different dimensions and concrete representations
are presented in Appendix \ref{appendix:cliff}.
Then we use the $N$-D harmonic oscillator
potential combined with spin-orbit coupling as
\bea
H^{ND, LL}=\frac{p_{ND}^2}{2m}+\frac{1}{2}m \omega_0^2 r_{ND}^2
-\omega_0
\sum_{1\le i<j \le N} \Gamma_{ij} L_{ij},
\label{eq:4DQHE}
\eea
where $L_{ij}=r_i p_j - r_j p_i$.
The spectra of Eq. \ref{eq:4DQHE} were studied in the context
of the supersymmetric quantum mechanics \cite{bagchi2001}.
However, its connection with Landau levels was not noticed
there.
The spin operators in $N$-dimensions are defined as
$\frac{1}{2}\Gamma_{ij}$.

For the 4D case, the minimal representations
for the $\Gamma$-matrices are still two-dimensional.
They are defined as
\bea
\Gamma_{ij}=-\frac{i}{2}[\sigma_{i}, \sigma_{j}], \ \ \,
\Gamma_{i4}=\pm \sigma_{i},
\eea
with $1\le i<j\le3$.
The $\pm$ signs of $\Gamma^{i4}$ correspond to two complex
conjugate irreducible
fundamental spinor representations of $SO(4)$, and the $+$
sign will be taken below.
The spectra of the positive helicity states are flat as
$E_{+, n_r}=(2n_{r}+2) \hbar \omega$.
The coherent state picture for the 4D lowest Landau levels can
be similarly constructed as follows:
Again pick up two orthogonal axes $\hat e$ and $\hat f$ to form
a 2D complex plane,  and define complex analytic functions therein as,
\bea
(x_a \hat e_a + i x_a \hat f_a)^l e^{-\frac{r^2}{4l_{so}^2}}
\otimes |\alpha_{\hat e, \hat f}\rangle,
\label{eq:4Dcoherent}
\eea
where $|\alpha_{\hat e, \hat f}\rangle$ is the eigenstate
of $\Gamma^{\hat e,\hat f}=\hat e_a \hat f_b \Gamma^{ab}$ satisfying
\bea
\Gamma^{\hat e,\hat f} |\alpha_{\hat e, \hat f}\rangle
=|\alpha_{\hat e, \hat f}\rangle.
\eea
Hence, its spin is locked with its orbital angular momentum in the
$\hat e$-$\hat f$ plane.

Following similar methods in Sect. \ref{sect:analyticity}, we can prove
that the 4D lowest Landau level wavefunctions for Eq. \ref{eq:4DQHE}
satisfy the 4D Cauchy-Riemann-Futer condition Eq. \ref{eq:quater_ana},
and thus are quaternionic analytic functions.
Again it can be proved that they form the complete basis for
quaternionic left-analytic polynomials in 4D.
As for even higher dimensions, quaternions are not defined.
Nevertheless, the picture of the complex analytic function defined in
the moving frame still applies.
If we still work in the spinor representation, we can express
the lowest Landau level wavefunctions as
$\psi_{LLL}(x_i)=f_{LLL}(x_i) e^{-\frac{r^2}{2l_0^2}}$,
where each component of the spinor $f_{LLL}$ is a polynomial of
$r_i$ $(1\le i \le N)$.
To work out the analytic properties of $f_{LLL}$, we
factorize Eq. \ref{eq:4DQHE} as
\bea
H^{ND,LL}= \hbar \omega_0 \left(\Gamma^i a_i^\dagger\right)
\left(\Gamma^j a_j\right),
\eea
where $a_i$ is the phonon operator in the $i$-th dimension
defined as
$a_i=\frac{1}{\sqrt 2}\big(\frac{1}{l_0} r_i +i \frac{l_0}{\hbar }p_i\big)$,
and $l_0=\sqrt{\frac{\hbar}{m\omega_0}}$.
Then $f_{LLL}(x_i)$ satisfies the following equation,
\bea
\Gamma^j \frac{\partial}{\partial x_j} f_{LLL}(x_i)=0,
\eea
which can be viewed as the Euclidean version of the
Weyl equation.
When coming back to 3D and 4D, and following
the mapping Eq. \ref{eq:map}, we arrive at quaternionic
analyticity.

New let us construct the off-centered solutions to the lowest
Landau level states in 4D.
We use $\vec r$ to denote a point in the subspace of $x_1$-$x_2$-$x_3$,
and $\hat \Omega$ as an arbitrary unit vector in it.
Set $\hat e=\hat \Omega$ and $\hat f=\hat e_4$ (the unit vector
along the 4th axis) in Eq. \ref{eq:4Dcoherent}.
$\alpha_{\hat\Omega\hat e_4}$ satisfies
\bea
(\sigma_{i4}\Omega_i)\alpha_{\hat\Omega\hat e_4 }
=(\vec \sigma \cdot \hat \Omega) \alpha_{\hat\Omega\hat e_4}
=\alpha_{\hat\Omega\hat e_4},
\eea
hence,
\bea
\alpha_{\hat\Omega\hat e_4}=(\cos\frac{\theta}{2},\sin\frac{\theta}{2} e^{i\phi})^T,
\label{eq:alpha}
\eea
where we have used the gauge convention that the singularity
is located at the south pole.
Define the magnetic translation in the $\hat\Omega$-$\hat e_4$ plane,
\bea
T_{\hat\Omega x_4} (u_0 \hat x_4)=
\exp\Big(-u_0 \partial_{x_4} -\frac{i}{4l_{so}^2}
(\vec r \cdot \hat\Omega) u_0 \Big),
\eea
which translates along the $\hat e_4$-axis at the distance of $u_0$.
Apply this translation to the state of $e^{-r^2/4l_{so}^2}\otimes \alpha_{\hat\Omega\hat e_4}$, we arrive at the off-center solution
\bea
\psi_{\Omega,u_0}(\vec r, x_4)= e^{-\frac{r^2+x_4^2}{4l_{so}^2}}
e^{-i\frac{r u_0}{2l_{so}^2}} \otimes \alpha_{\hat\Omega\hat e_4}.
\eea
Next, we perform the Fourier transform over the direction $\hat \Omega$,
\bea
\psi_{4D;j,j_z}(\vec r, x_4)&=& \int d \Omega ~
 Y_{-\frac{1}{2},l+\frac{1}{2}, m+\frac{1}{2}} (\hat \Omega)
\psi_{\Omega,w_0}(\vec r, x_4), \ \ \,  \ \ \,
\label{eq:4D_offcenter}
\eea
where $j=l+\frac{1}{2}$ and $j_z=m+\frac{1}{2}$.
Due to the Berry phase structure $\alpha_{\hat \Omega\hat e_4}$
over $\hat \Omega$, monopole spherical harmonics,
$Y_{-\frac{1}{2},l+\frac{1}{2},m+\frac{1}{2}}(\hat\Omega)$,
are used instead of the regular spherical harmonics.
Then Eq. \ref{eq:4D_offcenter} possesses the 3D rotational symmetry
around the new center $(0,0,0, w_0)$, and Eq. \ref{eq:4D_offcenter}
possesses  with the good quantum numbers of 3D angular momentum $(j, j_z)$.
The monopole harmonic function $Y_{q;jj_z}(\hat \Omega)$ here
is defined as
\bea
Y_{q;jj_z}(\hat \Omega)=\sqrt{\frac{2j+1}{4\pi}} e^{i(j_z+q)\phi} d^l_{j_z,-q}(\theta),
\eea
where $\theta$ and $\phi$ are the polar and azimuthal angles
of $\hat \Omega$, and
$d^l_{j_z,-q}(\theta)=\langle jj_z|e^{-iJ_y\theta}|j-q\rangle$
is the standard Wigner rotation $d$-matrix.
The gauge choice is consistent with that of the
Eq. \ref{eq:alpha}.

\subsection{\small Boundary helical Dirac and Weyl modes}
\label{sect:3Dboundary}
The topological nature of the 3D Landau level states exhibits clearly
in the gapless surface spectra.
Consider a ball of the radius $R_0\gg l_{so}$ imposed by the open
boundary condition.
We have numerically solved the spectra as shown in Fig. \ref{fig:3DLL}
(B).
Inside the bulk, the Landau level spectra are flat with respect to $j_+=l+\frac{1}{2}$.
As $l$ increases to large values such that the classic orbital radiuses
approach the boundary, the Landau levels become surface states
and develop dispersive spectra.

We can also derive the effective equation for the surface mode based on Eq. \ref{eq:3D_symm}.
Since $r$ is fixed at the boundary, it becomes a rotator equation on the sphere.
By linearizing the dispersion at the chemical potential $\mu$,
and replacing the angular momentum quantum number $l$ by the operator
$\vec \sigma \cdot \vec L$, we arrive at
$H_{sf}=(v_{f}/R_{0})\vec{\sigma} \cdot \vec{L}-\mu$
with $v_{f}$ the Fermi velocity.
This is the helical Dirac equation defined on the boundary sphere.
When expanded in the local patch around the north pole
$R_{0}\hat{z}$, we arrive at
\bea
H_{sf}=\hbar v_{f}(\vec{k}\times \vec{\sigma})\cdot \hat{z}-\mu.
\eea
The gapless surface states are robust against time-reversal invariant
perturbations if odd numbers of helical Fermi surfaces exist according
to the $\mathbb{Z}_2$ criterion \cite{kane2005,kane2005a}.
Since each fully occupied Landau level contributes one helical
Dirac Fermi surface, the bulk is topologically nontrivial if
odd numbers of Landau levels are occupied.

A similar procedure can be applied to the high-dimensional case
by imposing the open boundary condition to Eq. \ref{eq:4DQHE}.
For example, around the north pole of $r_N=(0,...., R_0)$,
the linearized low energy equation for the boundary modes is
\bea
H_{bd}=\hbar v_f \sum_{i=1}^{D-1} k_i \Gamma^{iN} -\mu.
\eea
On the boundary of the 4D sphere, it becomes the 3D Weyl equation that
\bea
H_{bd}=\hbar v_f \vec k \cdot \vec \sigma -\mu.
\eea

\subsection{Bulk-boundary correspondence}

\begin{widetext}

\begin{table}[h]
\begin{center}
\begin{tabular}{|c|c|c|c|}
\hline
& Bulk (Euclidean) & Boundary (Minkowski)\\
\hline
       &                      &               \\
2D LLL &  complex analyticity &  1D chiral wave \\
              & $\partial_x f +i \partial_y f=0$
              & $\partial_t \psi+\partial_x \psi=0$     \\
       &                      &                         \\
\hline
       &                      &                         \\
3D LLL & (3D) quaternionic analyticity & 2D helical Dirac mode\\
       & $\partial_x f +i \partial_y f +j\partial_z f=0$ &
       $ \partial_t \psi +\sigma_2 \partial_x \psi -\sigma_1 \partial_y \psi=0$\\
       &                      &                         \\
\hline
       &                      &                         \\
4D LLL &  quaternionic analyticity & 3D Weyl mode \\
       & $\partial_x f +i \partial_y f +j\partial_z f +k\partial_u f=0$
       & $\partial_t \psi +\sigma_1 \partial_x \psi +\sigma_2 \partial_y \psi
       +\sigma_3 \partial_z \psi=0$     \\
       &                      &                         \\
\hline
\end{tabular}
\caption{\normalsize Bulk-boundary correspondence in the lowest 
(LLL) states in 2, 3, and 4 dimensions.}
\end{center}
\end{table}
\end{widetext}

We have already studied the bulk and boundary states of 2D, 3D and
4D lowest Landau level states.
They exhibit a series of interesting bulk-boundary correspondences
as summarized in Table I.
In the 2D case, the bulk wavefunctions in the lowest Landau level
is complex analytic satisfying the Cauchy-Riemann condition.
The 1D edge states satisfy the chiral wave equation Eq. \ref{eq:chiraledge}.
It is essentially the Weyl equation, which is actually a
single component equation in 1D.
It can be viewed as the Minkowski version of the Cauchy-Riemann
condition of Eq. \ref{eq:cauchy}.
Or, conversely, the Cauchy-Riemann condition for the bulk
wavefunctions can be viewed as the Euclidean
version of the Weyl equation.

This correspondence goes in parallel in 3D and 4D lowest Landau
level wavefunctions.
Their bulk wavefunctions satisfy the quaternionic analytic
conditions, which can be viewed as an Euclidean version of the
helical Dirac and Weyl equations, respectively.

\subsection{Many-body interacting wavefunctions}

It is natural to further investigate many-body interacting wavefunctions
in the lowest Landau levels in 3D and 4D.
As is well-known that the complex analyticity of the 2D lowest Landau level wavefunctions results in the elegant from of the 2D Laughlin wavefunction Eq. \ref{eq:laughlin}, which describes a 2D quantum liquid \cite{laughlin1983,girvin1999}.
It is natural to further expect that the quaternionic analyticity of the
3D and 4D lowest Landau levels would work as a guidance in
constructing high-dimensional SU(2) invariant quantum liquid.
Nevertheless, the major difficulty is that quaternions do not commute.
It remains challenging how to use quaternions to represent
a many-body wavefunction with the spin degree of freedom.

Nevertheless, we present below the spin polarized fractional many-body
states in 3D and 4D Landau levels.
In the 3D case, if the interaction is spin-independent, we expect
spontaneous spin polarization at very low fillings due to the
flatness of lowest Landau level states in analogy to the 2D quantum Hall ferromagnetism \cite{Lee1990,Sondhi1993,Fertig1994,Read1995,girvin1999}.
According to Eq. \ref{eq:3Dcoherent}, fermions concentrate to
the highest weight states in the orbital plane $\hat e_1$-$\hat e_2$
with spin polarized along $\hat e_3$, then it is reduced to a 2D
quantum Hall-like problem on a membrane floating in the 3D space.
Any 2D fractional quantum Hall-like state can be formed under suitable
interaction pseudopotentials \cite{haldane1983,haldane1985,Prange1990}.
For example, the $\nu=\frac{1}{3}$ Laughlin-like state on this
membrane is constructed as
\bea
&&\Psi_{\frac{1}{3}}(\vec r_1,\vec r_2, ..., \vec r_n)_
{\sigma_1\sigma_2...\sigma_n}\nn \\
&=&\prod_{i < j} [(\vec r_i-\vec r_j)\cdot (\hat e_1 +i \hat e_2)]^3
\otimes \ket{\alpha_{\hat e_3}}_{\sigma_1} \ket{\alpha_{\hat e_3}}_{\sigma_2} ...
\ket{\alpha_{\hat e_3}}_{\sigma_n}, \nn \\
\label{eq:ferro_3D}
\eea
where $\ket{\alpha_{\hat e_3}}$ represents a polarized spin eigenstate along $\hat e_3$, and the Gaussian weight is suppressed for simplicity.
Such a state breaks rotational symmetry and time-reversal symmetry spontaneously,
thus it possesses low energy spin-wave modes.
Due to the spin-orbit locked configuration in Eq. \ref{eq:3Dcoherent},
spin fluctuations couple to the vibrations of the orbital motion plane,
thus the metric of the orbital plane becomes dynamic.
This is a natural connection to the work of geometrical
description in fractional quantum Hall states \cite{haldane2011,can2014,klevtsov2017}.

Let us consider the 4D case, we assume that spin is polarized
as the eigenstate $\ket{\uparrow}$ of
$\Gamma^{12}=\Gamma^{34}=\sigma_3$.
The corresponding spin-polarized lowest Landau level wavefunctions
are expressed as
\bea
\Psi^{4D}_{LLL,m,n}=(x+iy)^m (z+iu)^n 
\otimes \ket{\uparrow},
\eea
with $m, n\ge 0$.
If all these spin polarized lowest Landau level states with
$0 \le m < N_m$ and $0 \le n < N_n $ are
filled, the many-body wavefunction is a Slater-determinant as
\bea
\Psi^{4D}(v_1, w_1; \cdots; v_N, w_N)= \det[v_i^{\alpha} w_i^{\beta}],
\label{eq:4Dwf}
\eea
where the coordinates of the $i$-th particle form two pairs
of complex numbers as $v_i=x_i+iy_i$ and $w_i=z_i+iu_i$;
$\alpha$, $\beta$ and $i$ satisfy $0 \le \alpha <N_m $,  $0\le
\beta < N_n $, and $1\le i \le N=N_m N_n$.
Such a state has a 4D uniform density as $\rho=\frac{1}{4\pi^4 l_G^2}$.
A Laughlin-like wavefunction can be written down as
$\Psi^{4D}_k =(\Psi^{4D})^k$ whose filling relative to $\rho$ should be $1/k^2$.
It would be interesting to further study its electromagnetic responses
and fractional topological excitations based on $\Psi^{4D}_k$.
Again such a state spontaneously breaks rotational symmetry, and
the coupled spin and orbital excitations would be interesting.

\section{Dimensional reductions: 2D and 3D Landau levels with broken parity}
\label{sect:reduction}

In this section, we review another class of isotropic Landau level-like
states with time-reversal symmetry but broken parity in both 2D and 3D.
The Hamiltonians are again harmonic potential plus spin-orbit coupling,
but it is the coupling between spin and linear momentum, not orbital
angular momentum \cite{wu2011_Ian,li2012a,li2016}.
They exhibit topological properties very similar to Landau levels.

An early study of these systems filled with bosons can be found in Ref. \cite{wuexciton2008}.
The spin-orbit coupled Bose-Einstein condensations (BECs) spontaneously
break time-reversal symmetry, and exhibit the skyrmion type spin textures
coexisting with half-quantum vortices, which have been reviewed
in Ref. \cite{zhouXF2013}.
Spin-orbit coupled BECs have become an active research direction
of cold-atom physics, as extensively studied in literature.
\cite{wu2011_Ian,hu2012,sinha2011,ghosh2011,wang2010,ho2011}.

\subsection{The 2D parity-broken Landau levels}
We consider the Hamiltonian of Rashba spin-orbit coupling combined
with a 2D harmonic potential as
\bea
H_{2D,hm}=-\frac{\hbar^2\nabla^2}{2M}+\frac{1}{2} M \omega^2r^2
-\lambda (-i\hbar\nabla_x \sigma_y +i\hbar \nabla_y \sigma_x),
\nn \\
\label{eq:rashba}
\eea
where $\lambda$ is the spin-orbit coupling strength with the unit of velocity.
Eq. \ref{eq:rashba} possesses the $C_{v\infty}$-symmetry and time-reversal
symmetry.

We fill the system with fermions and work on its topological properties.
There are two length scales.
The trap length scale is defined as $l_T=\sqrt{\frac{\hbar }{M\omega}}$.
If without the trap, the single particle states $\psi_\pm (\vec k)$
are eigenstates of the helicity operator $\vec \sigma \cdot (\vec k
\times \hat z)$ with eigenvalues of $\pm 1$.
Their spectra are $\epsilon_{\pm}(\vec k)=
\hbar^2(k \mp k_{0})^2/(2M)$, respectively.
The lowest energy states are $\psi_+(\vec k)$ located around a ring in
momentum space with radius $k_{0}=M\lambda/\hbar$.
This introduces a spin-orbit length scale as $l_{so}=1/k_0$.
Then the ratio between these two length scales defines a dimensionless
parameter $\alpha=l_T/l_{so}$,
which describes the spin-orbit coupling strength relative
to the harmonic potential.

In the case of strong spin-orbit coupling, {\it i.e.}, $\alpha\gg 1$,
a clear picture appears in momentum space.
The low energy states  are reorganized from the plane-wave states
$\psi_+(\vec k)$ with $k\approx k_0 $.
Since $\alpha\gg 1$, we can safely project out the high energy
negative helicity states $\psi_{-}(\vec k)$, then
the harmonic potential in the low energy sector
becomes a Laplacian in momentum space
coupled to a Berry connection $\vec A_k$ as
\bea
V=\frac{M}{2}\omega^2 r^2=\frac{M}{2} \omega^2 (i\nabla_k - A_k)^2,
\eea
which drives particle moving around the ring.
It is well-known that for the Rashba Hamiltonian, the Berry connection
$A_k$ gives rise to a $\pi$-flux at $\vec k=(0,0)$ but zero
Berry curvature at $\vec k\neq 0$  \cite{xiao2010}.
The consequence is that
the angular momentum eigenvalues become half-integers as $j_z=m+\frac{1}{2}$.
The angular dispersion of the spectra can be estimated as
$E_{agl}(j_z)=(j_z^2/2\alpha^2) \hbar\omega$, which is strongly
suppressed by spin-orbit coupling.
On the other hand, the radial energy quantization remains as
usual $E_{rad}(n_r)=(n_r+\frac{1}{2}) \hbar\omega$ up to a constant.
Thus the total energy dispersion is
\bea
E_{n_r,j_z}\approx \Big(n_r+ \frac{1}{2}
+\frac{j_z^2}{2\alpha^2} \Big )\hbar \omega.
\eea
Similar results have also been obtained in recent works of Ref.
\cite{hu2012,sinha2011,ghosh2011}.
Since $\alpha\gg 1$, the spectra are nearly flat with
respect to $j_z$, we can treat $n_r$ as a Landau level index.
The wavefunctions of Eq. \ref{eq:rashba} in the lowest Landau level
with $n_r=0$ can be expressed in the polar coordinate
as Eq. \ref{eq:rashba_wf}.

Next we define the edge modes of such systems, and
their stability problem is quite different from that of
the chiral edge modes of 2D magnetic Landau level systems.
In the regime that $\alpha\gg 1$, the spin-orbit length $l_{so}$
is much shorter than $l_T$, such that $l_T$ is viewed as the cutoff
of the sample size.
States with $|j_z|< \alpha$ are viewed as bulk states
which localize within the region of $r<l_T$.
For states with $|J_z|\sim \alpha$, their energies touch the
the bottom of the next higher Landau level, and thus they
should be considered as edge states.
Due to time-reversal symmetry, each filled Landau level of
Eq. \ref{eq:rashba} gives rise to a branch of edge modes of
Kramers' doublets $\psi_{n_r,\pm j_z}$.
In other words, these edge modes are helical rather than chiral.
Similarly to the $Z_2$ criterion in Ref. \cite{kane2005,kane2005a},
which was defined for Bloch wave states,
in our case the following mixing term,
$
H_{mx}=\psi^\dagger_{2D,n_r,j_z} \psi_{2D,n_r, -j_z}+h.c.,
$
is forbidden by time-reversal symmetry.
Consequently, the topological index for this system is $Z_2$.

\subsection {Dimensional reduction from 3D}
In fact, we construct a Hamiltonian closely related to Eq. \ref{eq:rashba}
such that its ground state is solvable exhibiting exactly flat dispersion.
It is a consequence of dimensional reduction based on the 3D Landau level Hamiltonian Eq. \ref{eq:3D_symm}.
We cut a 2D off-centered plane perpendicular to the $z$-axis
with the interception $z=z_0$.
In this off-centered plane, inversion symmetry is broken, and Eq. \ref{eq:3D_symm} is reduced to
\bea
H_{2D,re}&=& H_{2D,hm} - \omega L_z \sigma_z.
\label{eq:2D_reduce}
\eea
The first term is just Eq. \ref{eq:rashba} by identifying $\lambda=\omega z_0$
and the frequency of the 2nd term is the same as that of the harmonic trap.
If $z_0=0$, the Rashba spin-orbit coupling vanishes, and
Eq.\ref{eq:2D_reduce} becomes the 2D quantum spin-Hall Hamiltonian,
which is a double copy of Eq. \ref{eq:2D_symm}.
At $z_0\neq 0$, $\sigma_z$ is no longer conserved due to
spin-orbit coupling.

In Sect. \ref{sect:off-center}, we derived the off-centered ellipsoid
type wavefunction in Eq. \ref{eq:fourier}.
After setting $z=z_0$ in Eq. \ref{eq:fourier}, we arrive
at the following 2D wavefunction,
\bea
\psi_{2D,j_z}(r,\phi)&=&
e^{-\frac{r^2}{4l_{so}^2}}\Big\{
e^{im\phi} J_m(k_0 r) \ket{\uparrow}\nn \\
&+&e^{i(m+1)\phi} J_{m+1} (k_0 r)
\ket{\downarrow} \Big\},
\label{eq:rashba_wf}
\eea
where $J_m(k_0r)$'s are the Bessel functions.
It is straightforward to prove that the simple reduction indeed
gives rise to the solutions to the lowest Landau levels for Eq. \ref{eq:2D_reduce}, since the partial derivative along the $z$-direction
of the solution in Eq. \ref{eq:fourier} equal zero at $z=z_0$.
We also derive that the energy dispersion is
exactly flat as,
\bea
H_{2D,re} ~\psi_{2D,j_z}= \Big(1 -\frac{\alpha^2}{2} \Big)
\hbar \omega~ \psi_{2D,j_z}.
\eea

The above two Hamiltonians Eq. \ref{eq:2D_reduce} and Eq. \ref{eq:rashba} are
nearly the same except the $L_z\sigma_z$ term, whose effect
relies on the distance from the origin.
Consider the lowest Landau level solutions at $\alpha\gg 1$.
The decay length of the Gaussian factor is $l_T$.
Nevertheless, the Bessel functions peak around $k_0 r_0\approx m$, i.e., $r_0\approx \frac{m}{\alpha} l_T$.
Hence for states with $j_z<\alpha$, their wavefunctions
already decay before reach $l_T$.
Then the $L_z \sigma_z$-term compared to the Rashba one
is a small perturbation at the order of $\omega r_0/
\lambda=r_0/z_0\ll 1$.
In this regime, these two Hamiltonians are equivalent.
In contrast, in the opposite limit that $j_z\gg \alpha^2$,
the Bessel functions are cut off by the Gaussian factor,
and only their initial power-law parts participate, and
the classic orbit radiuses are just $r_0\approx \sqrt{m} l_T$,
then the physics of Eq. \ref{eq:2D_reduce} is controlled by
the $L_s\sigma_z$-term as in the quantum spin Hall systems.
For the intermediate region that $\alpha<j_z< \alpha^2$, the physics
is a crossover between the above two limits.

The many-body physics based on the above spin-orbit coupled Landau levels
in Eq. \ref{eq:rashba_wf} would be very interesting.
Fractional topological states would be expected which
are both rotationally and time-reversal invariant.
However, $s_z$ is not a good quantum number and party is also broken,
hence, these states should be very different from a double copy of
fractional Laughlin states with spin up and down particles.
The nature of topological excitations and properties of
edge modes will be deferred to a future study.

\subsection{The 3D parity-broken Landau levels}
We have also considered the problem of 3D harmonic potential
plus a Weyl-type spin-orbit coupling as
\cite{li2012a},
\bea
H_{3D,hm}=-\frac{\hbar^2 \nabla^2}{2M}+\frac{1}{2}M \omega^2r^2
-\lambda (-i\hbar\vec \nabla \cdot \vec \sigma).
\label{eq:3DSO}
\eea
The analysis can be performed in parallel to the 2D case.
In the absence of spin-orbit coupling, the low energy states of
Eq. \ref{eq:3DSO} in momentum space form a spin-orbit sphere.
The harmonic potential further quantizes the energy spectra as
\bea
E_{n_r,j,j_z}\approx \big(n_r+ \frac{1}{2}
+\frac{j(j+1)}{2\alpha^2} \big )\hbar \omega,
\eea
where $n_r$ is the Landau level index and $j$ is the total
angular momentum.
Again $j$ takes half-integer values because the Berry phase
on the low energy sphere exhibits a unit monopole structure.

Now we perform the dimensional reduction from the 4D Hamiltonian
Eq. \ref{eq:4DQHE} to 3D.
We cut a 3D off-centered hyper-plane perpendicular to the 4-th
axis with the
interception $x_4=u_0$.
Within this 3D hyper-plane of $(x_1,x_2,x_3, x_4=u_0)$,
Eq. \ref{eq:4DQHE} is reduced to
\bea
H_{3D,re}=H_{3D,hm}-\omega \vec L \cdot \vec \sigma,
\label{eq:3D_reduce}
\eea
where the first term is just Eq. \ref{eq:3DSO} with the spin-orbit
coupling strength set by $\lambda=\omega u_0$.
Again, based on the center-shifted wavefunction in the lowest Landau
level Eq. \ref{eq:4D_offcenter}, and by setting $x_4=u_0$, we
arrive at the following wavefunction
\bea
\psi_{3D, J J_z}(\vec r)&=& e^{-\frac{r^2}{4l_{so}^2}} \Big\{ j_l(k_0 r)
Y_{+,J,J_z} (\Omega_r) \nn \\
&+&i j_{l+1}(k_0 r) Y_{-,J,J_z} (\Omega_r) \Big\},
\label{eq:3D_WF}
\eea
where $k_0=u_0/l_T^2=m \lambda/\hbar$;
$j_l$ is the $l$-th order spherical Bessel function.
$Y_{\pm,j, l,j_z}$'s are the spin-orbit coupled spherical harmonics defined as
\bea
Y_{+,j,l,j_z}(\Omega)=\Big(\sqrt{\frac{l+m+1}{2l+1}} Y_{lm},
\sqrt{\frac{l-m}{2l+1}} Y_{l,m+1}\Big)^T
\nn
\eea
with the positive
eigenvalue of $l\hbar$ for $\vec \sigma \cdot \vec L$, and
\bea
Y_{-,j,l,j_z}(\Omega)=\Big(-\sqrt{\frac{l-m}{2l+1}} Y_{lm},
\sqrt{\frac{l+m+1}{2l+1}} Y_{l,m+1}\Big)^T
\nn
\eea
with the negative
eigenvalue of $-(l+1)\hbar$  for $\vec \sigma \cdot \vec L$.
It is straightforward to check that $\psi_{3D,j,j_z}(\vec r)$
in Eq. \ref{eq:3D_WF} is the ground state wavefunction
satisfying
\bea
H_{3D,re} \psi_{3D,j,j_z}(\vec r)= \Big (\frac{3}{2}
-\frac{\alpha^2}{2} \Big )\hbar \omega
\psi_{3D,j,j_z}(\vec r).
\eea

\section{High-dimensional Landau levels of Dirac fermions}
\label{sect:diracLL}

In this section, we review the progress on the study of 3D Landau levels
of relativistic Dirac fermions \cite{li2012}.
This is a square-root problem of the 3D Landau level problem of
Schr\"odinger fermions reviewed in Sect. \ref{sect:3DLL}.
This can also be viewed of Landau levels of complex quaternions.

\subsection{3D Landau levels for Dirac fermions}
In Eq. \ref{eq:2DLL_harm}, two sets of phonon creation and annihilation
operators $(a_x,a_y;a_x^\dagger,a_y^\dagger)$ are combined with the
real and imaginary units to construct Landau level Hamiltonian for
2D Dirac fermions.
Science in 3D there exist three sets of phonon creation and annihilation operators,
complex numbers are insufficient.

The new strategy is to employ Pauli matrices $\vec\sigma$
such that
\bea
H_{3D}^D&=& v \Big\{ \alpha_i  p_i
+ \gamma_i  i \hbar \frac{r_i}{l_0^2}  \Big\}
=
\frac{\hbar \omega}{\sqrt 2}
\left[
\begin{array}{cc}
0 & i\sigma_i a_i^\dagger \\
-i\sigma_i a_i & 0
\end{array}
\right ], \ \ \,
\label{eq:Dirac_3D}
\eea
where the repeated index $i$ runs over $x,y$ and $z$;
$v_F= \frac{1}{2} l_0\omega$.
The convention of $\gamma$-matrices is
\bea
\beta=\gamma_0=\tau_3\otimes I, \ \ \, \alpha_i=\tau_1\otimes\sigma_i, \ \ \,
\gamma_i=\beta\alpha_i=i\tau_2\otimes\sigma_i.
\eea
Eq. \ref{eq:Dirac_3D} contains the complex combination of
momenta and coordinates, thus it can be viewed as the generalized
Dirac equation defined in the phase space.
Apparently, Eq. \ref{eq:Dirac_3D} is rotationally invariant.
It is also time-reversal invariant with the definition
$T=\gamma_2\gamma_3K$ where $K$ is the complex conjugation,
and $T^2=-1$.
Since $\beta H_{3D}^D \beta =-H_{3D}^D$, $H_{3D}^D$ possesses the
particle-hole symmetry and its spectra are symmetric with
respect to the zero energy.

Similar to the 2D case,  $(H^{D}_{3D})^2$ has a supersymmetric structure.
The square of Eq. \ref{eq:Dirac_3D} is block-diagonal, and two blocks
are just the non-relativistic 3D Landau level Hamiltonians
in Eq. \ref{eq:3D_symm},
\bea
\frac{(H^{D}_{3D})^2}{\frac{1}{2}\hbar \omega}=
\left[ \begin{array}{cc}
H^+_{3D,sym}-\frac{3}{2}\hbar\omega& 0\\
0&H^-_{3D,sym}+\frac{3}{2}\hbar\omega
\end{array}
\right],
\label{eq:super}
\eea
where the mass $M$ in $H^\pm_{3D,sym}$ is defined through the relation $l_{0}=\sqrt{\hbar/(M\omega)}$.
Based on Eq. \ref{eq:super}, the energy eigenvalues of Eq. \ref{eq:Dirac_3D} are
$E_{\pm n_r, j, j_z}=\pm \hbar \omega \sqrt n_r$, corresponding to taking
positive and negative square roots of the non-relativistic dispersion,
respectively.
The Landau level wavefunctions of the 3D Dirac electrons are expressed
in terms of the non-relativistic ones of
Eq. \ref{eq:3D_symm} as
\bea
\Psi_{\pm n_r,j,j_z} (\vec r)=\frac{1}{\sqrt 2}
\left( \begin{array}{c}
\psi_{n_r,j_+,l, j_z} (\vec r) \\
\pm i\psi_{n_r-1,j_-, l+1, j_z }(\vec r)
\end{array}
\right).
\label{eq:LLWF}
\eea
Please note that the upper and lower two components
possess different values of orbital angular momenta.
They exhibit opposite helicities of $j_\pm$, respectively.
The zeroth Landau level ($n_r=0$) states are special: There is only one
branch, and only the first two components of the wavefunctions
are non-zero as
\bea
\Psi_{n_r=0, j, j_z} (\vec r)=\left[\begin{array}{c}
\Psi_{LLL,j_+,j_z}(\vec r)\\
0
\end{array}
\right],
\eea
where $\Psi_{LLL,j_+,j_z}$'s are the lowest Landau level solutions
to the non-relativistic Hamiltonian Eq. \ref{eq:LL_orthonomal}.

Again the nontrivial topology of the 3D Dirac Landau problem manifests
in the gapless surface modes.
Consider a spherical boundary with a large radius $R$.
The Hamiltonian takes the form of Eq. \ref{eq:Dirac_3D} inside the sphere,
and has the usual massive Dirac Hamiltonian
$H_{D}=\alpha_i P_i + \beta\Delta$ outside.
We also take  the limit of $|\Delta|\rightarrow \infty$.
Loosely speaking, this is a square-root version of the open boundary
problem of the 3D non-relativistic case
in Sect. \ref{sect:3Dboundary}.
Since square-roots can be taken as positive and negative,
each branch of the surface modes in the non-relativistic Schr\"odinger case
corresponds to a pair of relativistic surface branches.
These two branches disperse upward and downward as increasing the
angular momentum $j$, respectively.
However, the zeroth Landau level branch is singled out.
We can only take either the positive, or, negative
square root, for its surface excitations.
Hence, the surface spectra connected to the bulk zeroth Landau level
disperse upward or downward depending on the sign of the vacuum mass.

\subsection{Non-minimal Pauli coupling and anomaly}
Due to the particle-hole symmetry of Eq. \ref{eq:Dirac_3D}, the 3D
zeroth Landau level states are half-fermion modes in the same
way as those in the 2D Dirac case.
Moreover, in the 3D case, the degeneracy is over the
3D angular momentum numbers $(j_+, j_z)$, thus the
degeneracy is much higher than that of 2D.
According to whether the chemical potential $\mu$ approaches
$0^+$ or $0^-$, each state in the zeroth lowest Landau level contributes a
 positive, or, negative half fermion number, respectively.
The Lagrangian of the 3D massless Dirac Landau level problem is,
\bea
L= \bar \psi \Big\{ \gamma_0 i\hbar \partial_t
- i v \gamma_i  \hbar \partial_i\Big\} \psi
- v \hbar  \bar\psi  i \gamma_0 \gamma_i \psi
F^{0i}(r),
\label{eq:lang}
\eea
where $F^{0i}= x_i/l_0^2$.
In all the dimensions higher than 2, $i\gamma_0\gamma_i$'s are
a different set from $\gamma_i$'s, thus Eq. \ref{eq:lang} is an example of
non-minimal coupling of the Pauli type.
More precisely, it is a coupling between the electric field and the electric
dipole moment.
In the 2D case, the Lagrangian has the same form as Eq. \ref{eq:lang},
however, since $\gamma_{0,1,2}$ are just the usual Pauli matrices,
it is reduced to the minimal coupling to the $U(1)$ gauge field.

Eq. \ref{eq:lang} is a problem of massless Dirac fermions coupled to
a background field via non-minimal Pauli coupling at 3D and above.
Fermion density is pumped by the background field from vacuum.
This is similar to parity anomaly, and indeed it is
reduced to parity anomaly in 2D.
However, the standard parity anomaly only exists in even spatial
dimensions \cite{redlich1984,redlich1984a,semenoff1984,niemi1986}.
By contrast, the Landau level problems of massless Dirac fermions
can be constructed in any high spatial dimensions.
Obviously, they are not chiral anomalies defined in odd spatial
dimensions, either.
It would be interesting to further study the nature of such
kind of ``anomaly''.

In fact, Eq. \ref{eq:Dirac_3D} is just one possible representation
for Landau levels of 3D massless Dirac fermions.
A general 3D Dirac Landau level Hamiltonian with a mass term
can be defined as
\bea
H_{3D}^D (\hat e_1, \hat e_2,\hat e_3)&=&
v \Big [(\vec \tau \cdot \hat e_1) \otimes\sigma_i  P_i
+ \hbar/l_0^2 (\vec \tau \cdot \hat e_2 )\otimes \sigma_i r_i
\Big] \nn \\
&+& mv^2 (\vec \tau \cdot \hat e_3)\otimes I,
\label{eq:3D_Dirac_general}
\eea
where $\tau_{1,2,3}$ are Pauli matrices acting in the particle-hole
channel, and $\hat e_{1,2,3}$ form an orthogonal triad in the 3D space.
Eq. \ref{eq:Dirac_3D} corresponds to the case of $\hat e_1=\hat x$
and $\hat e_2=\hat y$, and $m=0$.
The parameter space of $H_{3D}^D (\hat e_1, \hat e_2, \hat e_3)$
is the triad configuration space of $SO(3)$.

Consider that the configuration of the triad $\hat e_{1,2,3}$
is spatially dependent.
The first term in Eq. \ref{eq:3D_Dirac_general} should be symmetrized
as $\frac{1}{2}\vec \tau \cdot
\big[(\hat e_1(r) P_i + P_i \hat e_1(r) \big] \otimes \sigma_i$.
The spatial distribution of the triad of $\hat e_{1,2,3}(\vec r)$
can be in a topologically nontrivial configuration.
If the triad is only allowed to rotate around a fixed axis, its
configuration space is $U(1)$ which can form a vortex line
type defect.
There should be a Callan-Harvey type effect of the fermion
zero modes confined around the vortex line \cite{Callan1985}.
In general, we can also have a 3D skyrmion type defect of the triad
configuration.
These novel defect problems and the associated zero energy fermionic
excitations will be deferred for later studies.

\subsection{Landau levels for Dirac fermions in four dimensions
and above}
The Landau level Hamiltonian for Dirac fermions can be generalized
to arbitrary $N$-dimensions ($N$-D) by replacing the Pauli matrices in Eq. \ref{eq:Dirac_3D} with the Clifford algebra $\Gamma$-matrices in $N$-D
as presented in Appendix \ref{appendix:cliff}.

In odd dimensions $D=2k+1$, we use the $k$-th rank
$\Gamma$-matrices to construct the $D=2k+1$ dimensional
Dirac Landau level Hamiltonian,
\bea
H^D_{2k+1}=\frac{\hbar\omega_0}{2}\left(
\begin{array}{cc}
0& i\Gamma_i^{(k)} a^\dagger_i \\
-i\Gamma_i^{(k)} a_i & 0
\end{array}
\right),
\label{eq:highDirac}
\eea
where $\Gamma_i^{(k)}$ is $2^k\times 2^k$ dimensional matrix,
and $1\le i\le 2k+1$.
Again, $(H_{2k+1}^{D})^2$ are reduced to a supersymmetric version
of the $2k+1$-dimensional Landau level Hamiltonian for Sch\"odinger
fermions in Eq. \ref{eq:4DQHE}.
All other properties are parallel to the 3D case explained before.

For even dimensions $D=2k$, we still take Eq. \ref{eq:highDirac}
by suppressing the $2k+1$-th dimension.
Nevertheless, such a construction is reducible.
In the representation presented in
Appendix \ref{appendix:cliff},
Eq \ref{eq:highDirac} after eliminating the $\Gamma^{(k)}_{2k+1}$ term
can be factorized into a pair of Hamiltonians
\bea
H^{\pm,D}_{2k}=\frac{\hbar\omega_0}{2}\left(
\begin{array}{cc}
0& \pm a^\dagger_{2k}+ i \sum_{i=1}^k \Gamma_i^{(k-1)} a^\dagger_i \\
\pm a_k -i \sum_{i=1}^k \Gamma_i^{(k-1)} a_i & 0
\end{array}
\right),\nn \\
\eea
where $\pm$ correspond to the pair of fundamental and anti-fundamental
spinor representations in even dimensions.

For example, for the 4D system, we have
\bea
H_{4D}^{\pm, D} =\frac{\hbar \omega}{\sqrt 2}\left[ \begin{array}{cc}
0& \pm a^\dagger_4 + i \sigma_i a_i^\dagger\\
\pm a_4 - i \sigma_i  a_i &0
\end{array}
\right].
\label{eq:reduce}
\eea
Since three quaternionic imaginary units $i,j$, and $k$ can be
mapped to Pauli matrices $i\sigma_1, i\sigma_2$, and $i\sigma_3$,
respectively,
and the annihilation and creation operators are essentially complex.
$\pm a_4 - i \sigma_i  a_i$ can be viewed as complex quaternions.
Hence, Eq. \ref{eq:reduce} is a complex quaternionic
generalization of the 2D Dirac Landau level Hamiltonian Eq. \ref{eq:2DLL_harm}.

\section{High-dimensional Landau levels in the Landau-like gauge}
\label{sect:landaugauge}

We have discussed the construction of Landau levels in high dimensions
for both Schr\"odinger and Dirac fermions in the symmetric-like
gauge.
In those problems, the rotational symmetry is explicitly maintained.
Below we review the construction of Landau levels in the Landau-like
gauge by reorganizing plane-waves to exhibit non-trivial
topological properties \cite{li2013a}.
It still preserves the flat spectra but not the rotational
symmetry.

\subsection{Spatially separated 1D chiral modes -- 2D Landau level}
We recapitulate the Landau level in the Landau gauge.
By setting $A_x=By$ and $A_y=0$ in the Hamiltonian Eq. \ref{eq:2DLL},
we arrive at
\bea
H_{2D,L}&=& \frac{P_y^2}{2M} +\frac{(P_x-\frac{e}{c} A_x)^2}{2M}\nn \\
&=&\frac{P_y^2}{2M} +\frac{1}{2}M \omega^2 (y-l_B^2 P_x)^2,
\label{eq:2D_LL_Landau}
\eea
with $l_B=\sqrt{\frac{\hbar}{M\omega}}$.
The Landau level wavefunctions are a product of a plane wave along
the $x$-direction and a 1D harmonic oscillator wavefunction in the
$y$-direction,
\bea
\psi_n(x,y)=e^{ik_x} \phi_n(y-y_0(k)),
\eea
where $\phi_n$ is the $n$th harmonic oscillator eigenstate
with the characteristic length $l_B$, and its
equilibrium position is determined by the momentum $k_x$,
$y_0(k_x)=l_B^2 k_x$.

Hence, the Landau level states with positive and negative values of $k_x$
are shifted oppositely along the $y$-direction, and become spatially
separated.
If imposing the open boundary condition along the $y$-axis, chiral
edge modes appear.
The 2D quantum Hall effect is just the spatially separated 1D chiral
anomaly in which the chiral current becomes the transverse charge current.
After the projection to the lowest Landau level, we identify
$y=l_B^2 k_x$, hence, the two spatial coordinates
$x$ and $y$ become non-commutative as  \cite{lee2004}
\bea
[x,y]_{LLL}=il_B^2.
\eea
In other words, the $xy$-plane is equivalent to the 2D phase space of a 1D system
$(x;k_x)$ after the lowest Landau level projection.

\subsection{Spatially separated 2D helical modes - 3D Landau level}
The above picture can be generalized to the 3D Landau level states:
We keep the plane-wave modes with the good momentum numbers $(k_x,k_y)$
and shift them along the $z$-axis.
Spin-orbit coupling is introduced to generate the helical structure
to these plane-waves, and the shifting direction
is determined by the sign of helicity.
To be concrete, the 3D Landau level Hamiltonian in the Landau-like
gauge is constructed as follows \cite{li2013a},
\bea
H^{\pm}_{3D,L} &=&
\frac{\vec P^2}{2M}+\frac{1}{2}M \omega_{so}^2 z^2
\mp \omega_{so} z (P_x \sigma_y-P_y \sigma_x) \nn \\
&=& \frac{P_z^2}{2M } + \frac{1}{2} M \omega_{so}^2
[z \mp \frac{1}{\hbar}l_{so}^2 (P_x \sigma_y - P_y\sigma_x) ]^2, \nn\\
\label{eq:3D_Landau}
\eea
where $l_{so}=\sqrt{\hbar/(M\omega_{so})}$.

The key of Eq. \ref{eq:3D_Landau} is the $z$-dependent Rashba spin-orbit
coupling, such that it can be decomposed into a set of 1D harmonic
oscillators along the $z$-axis coupled to 2D helical plane-waves.
Define the helicity operator $\hat \Sigma_{2d} (\hat k_{2d} )=
\hat k_x \sigma_y -\hat k_y \sigma_x$ where $\hat k$ is the
unit vector along the direction of $\vec k$.
$\chi_\Sigma(\hat k_{2d})$ is the eigenstate of $\hat \Sigma$
and $\Sigma=\pm 1$ is the eigenvalue.
Then the 3D Landau level wavefunctions are expressed as
\bea
\Psi_{n, \vec k_{2d},\Sigma}(\vec{r}) = e^{i \vec k_{2d} \cdot \vec r_{2d}}
\phi_n [z - z_0( k_{2d},\Sigma)] \otimes \chi_{\Sigma}(\hat k_{2d}),
\label{eq:3DLL_WF}
\eea
where $\vec k_{2d}=(k_x,k_y)$, $\vec r_{2d}=(x,y)$,
and $k_{2d}=(k_x^2+k_y^2)^{\frac{1}{2}}$.
The energy spectra of Eq. \ref{eq:3DLL_WF} is flat as $E_n=(n+\frac{1}{2}) \hbar
\omega_{so}$.
The center of the oscillator wavefunction in Eq. \ref{eq:3DLL_WF}
is shifted to $z_0=l_{so}^2 k_{2d} \Sigma$.

The 3D Landau level wavefunctions of Eq. \ref{eq:3DLL_WF} are spatially
separated 2D helical plane-waves along the $z$-axis.
As shown in Fig. \ref{fig:3Ddemon} (A), for states
with opposite helicity eigenvalues, their central positions
are shifted in opposite directions.
If open boundaries are imposed perpendicular to the $z$-axis, each Landau
level contributes a branch of gapless helical Dirac modes.
For the system described by $H^+_{3D,L}$, the surface Hamiltonian is
\bea
H_{bd}= \pm v_f (\vec p \times \vec \sigma) \cdot \hat z-\mu,
\eea
where $\pm$ apply to upper and lower boundaries, respectively.

Unlike the 2D case in which the symmetric and Landau gauges are
equivalent,
the Hamiltonian of the symmetric-like gauge Eq. \ref{eq:3D_symm} and
that of the Landau-like gauge Eq. \ref{eq:3D_Landau} are {\it not}
gauge equivalent.
The Landau-like gauge explicitly breaks the 3D rotational symmetry
while the symmetric-like gauge preserves it.
Physical quantities calculated based on Eq. \ref{eq:3D_Landau}, such as
density of states, are not 3D rotationally symmetric as those
from Eq. \ref{eq:3D_symm}.
Nevertheless, these two Hamiltonians belong to the same topological class.

\begin{figure}[htbp]
\centering\epsfig{file=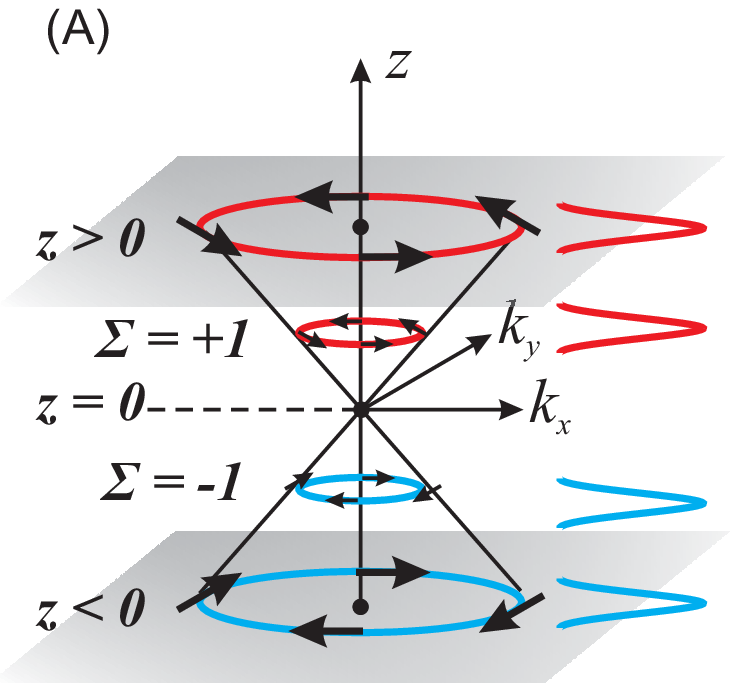,clip=1, width=0.3\textwidth,
}
\hspace{2mm}
\centering\epsfig{file=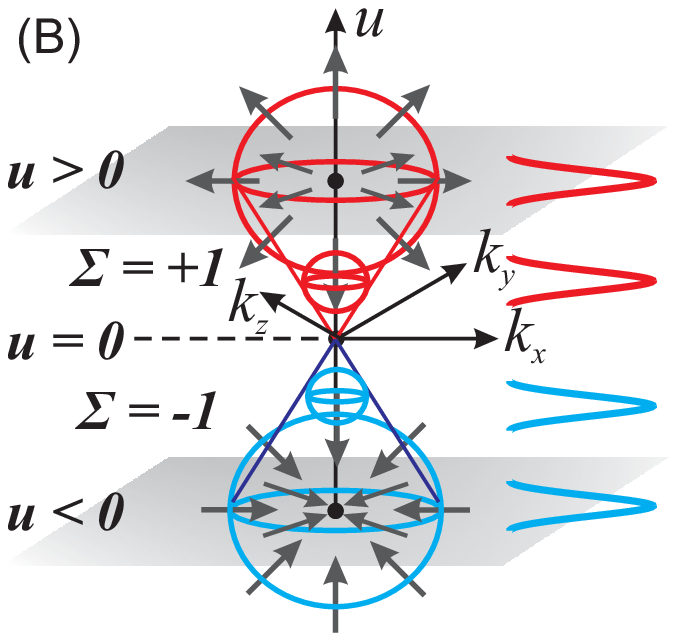,clip=2, width=0.3\textwidth,
}
\hspace{2mm}
\centering\epsfig{file=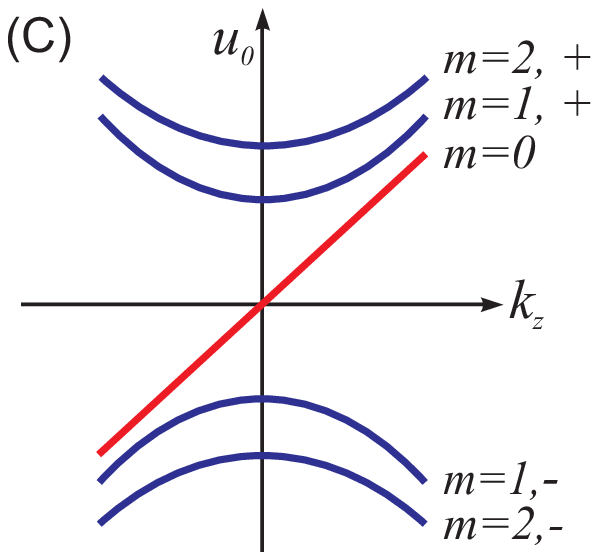,clip=1,width=0.3\textwidth, angle=0}
\caption{\small
(A) 3D Landau level wavefunctions as spatially separated 2D helical Dirac modes
localized along the $z$-axis.
(B) 4D Landau level wavefunctions as spatially separated 3D Weyl modes
localized along the $u$-axis.
Note that 2D plane-wave modes with opposite helicities
and the 3D ones with opposite chiralities are located at
opposite sides of $z=0$ and $u=0$ planes, respectively.
C) The central positions $u_{0}(m,k_z,\nu)$ of the 4d Landau levels
in the presence of the magnetic field $\vec B= B\hat z$.
The branch of $m=0$ runs across the entire $u$-axis, which gives
rise to quantized charge transport along $u$-axis in the presence
of $\vec E\parallel \vec B$ as indicated in Eq. \ref{eq:4DQHE}.
From Ref. \cite{li2013a}.
}
\label{fig:3Ddemon}
\end{figure}


\subsection{Spatially separated 3D Weyl modes --4D Landau level}
\label{sect:4D}
Again we can easily generalize the above procedure to any dimensions.
For example, in four dimensions, we need to use the 3d helicity
operator $\hat \Sigma_{3d} =\hat P_{3d} \cdot \vec \sigma$, whose
eigenstates are denoted as $\chi_{\Sigma}$ with the eigenvalues
$\Sigma=\pm 1$.
Then the 4D Landau level Hamiltonian is defined as \cite{li2013a}
\bea
H^{4d,\mp}_{LL} &=& \frac{P_u^2+\vec P_{3d}^2}{2M}+\frac{1}{2}M \omega^2 u^2
\mp \omega u \vec{P}_{3d} \cdot \vec{\sigma} \nn \\
&=& \frac{P_u^2}{2M } + \frac{1}{2} M \omega_{so}^2
(u \mp \frac{1}{\hbar}l_{so}^2 \vec P_{3d} \cdot \vec \sigma )^2, \ \ \,
\label{eq:4D_LL}
\eea
where $u$ and $P_u$ are the coordinate and momentum
in the 4th dimension, respectively,
and $\vec P_{3d}$ is defined in the $xyz$-space.
Inside each Landau level, the spectra are flat with respect to
$\vec k_{3d}$ and $\Sigma$.
Similarly to the 3D case, the 4D LL spectra and wavefunctions are solved
by reducing Eq. \ref{eq:4D_LL} into a set of 1D harmonic
oscillators along the $u$-axis as
\bea
\Psi_{n, \vec{k}_{3d},\Sigma}(\vec{r},u) = e^{i \vec{k}_{3d} \cdot \vec{r}}
\phi_n [u - u_0(k_{3d},\Sigma)] \otimes \chi_{\Sigma}(\vec k_{3d}).
\eea
The central positions $u_0(k_{3d},\Sigma)=\Sigma l_{so}^2 k_{3d}$.
This realizes the spatial separation of the 3D Weyl fermion modes
with the opposite chiralities as shown in Fig. \ref{fig:3Ddemon} (B).
With an open boundary imposed along the $u$-direction,
the 3D chiral Weyl fermion modes appear on the boundary
\bea
H_{bd}=\pm v_f (\vec k_{3D} \cdot \vec \sigma)-\mu.
\eea

\subsection{\normalsize Phase space picture of high-dimensional
Landau levels}
\label{sect:phase}
For the 2D case described by Eq. \ref{eq:2D_LL_Landau},
the $xy$-plane is equivalent to the 2D phase space of a 1D system
$(x;k_x)$ after the lowest Landau level projection.
The discrete step of $k_x$ is $\Delta k_x=2\pi/L_x$, and
the momentum cutoff of the bulk state is determined by $L_y$
as $k_{bk}=L_y/(2l_B^2)$.
Since $|k_x|<k_{bk}$, the number of states $N_{2D,LL}$ scales with
$L_xL_y$ as the usual 2D systems,
but the crucial difference is that enlarging $L_y$ does not change
$\Delta k_x$ but instead increases $k_{bk}$.

Similarly, the 3D Landau level states (Eq. \ref{eq:3D_Landau}) can be viewed as
states in the 4D phase space ($xy;k_xk_y$).
The $z$-axis plays the double role of $k_x$ and $k_y$.
After the lowest Landau level projection,
$z$ is equivalent to $z=l_{so}^2(p_x \sigma_y - p_y\sigma_x)/\hbar$,
and thus
\bea
&&[x, z]_{LLL}=i l_{so}^2 \sigma_y, \ \ \,
[y, z]_{LLL}=-i l_{so}^2 \sigma_x,
\nn \\
&&[x,y]_{LLL}=0.
\eea
The momentum cutoff of the bulk state is determined as
$(k_x^2+k^2_y)^{\frac{1}{2}}<k_{bk}=\hbar L_z/
(2 l_{so}^2)$, thus the total number of states $N$ scales as $L_x L_y L_z^2$.
As a result, the 3D local density of states linearly diverges as
$\rho_{3D}(z)\propto |z|/l_{so}^4$ as $|z|\rightarrow \infty$.
Similar divergence also occurs in the symmetric-like gauge
as $\rho_{3D}(r)\propto r/l_{so}^4$.
Now this seeming pathological result can be understood as the
consequence of squeezing states of 4D phase space $(xy;k_xk_y)$ into the 3D
real space $(xyz)$.
In other words, the correct thermodynamic limit should be taken according
to the volume of 4D phase space.
This reasoning is easily extended to the 4D LL systems
(Eq.\ref{eq:4D_LL}), which can be understood as a 6D phase
space of $(xyz;k_xk_yk_z)$.

\subsection{\normalsize Charge pumping and the 4D quantum Hall effects}
The above 4D Landau level states presented in Sect. \ref{sect:4D} exhibit
non-linear electromagnetic response \cite{zhang2001,qi2008,Werner2012,
Frohlich2000} as the 4D quantum Hall effect.
We apply the electromagnetic fields as
\bea
\vec{E}=E \hat{z}, \ \ \, \vec{B}=B \hat{z},
\eea
to the 4D Landau level Hamiltonian Eq. \ref{eq:4D_LL}
by minimally coupling fermions to the $U(1)$ vector potential,
\bea
A_{em,x}=0, \ \ \, A_{em,y} = B x, \ \ \, A_{em,z} = -cEt.
\eea
The $\vec B$-field further quantizes the chiral plane-wave modes
inside the $n$-th 4D spin-orbit Landau level states into a series of
2D magnetic Landau level states in the
$xy$-plane as labeled by the magnetic Landau level index $m$.
For the case of $m=0$, the eigen-wavefunctions are spin polarized as
\bea
\Psi_{n,m=0}(k_y,k_z)&=&e^{ik_y y+ik_z z} \phi_{n}(u-u_0(k_z,m=0))
\nn \\
&\times&\varphi_{m=0}(x-x_0(k_y))
\otimes \ket{\uparrow},
\label{eq:em4dwf}
\eea
where $\phi_n$ is the $n$-th order harmonic oscillator wavefunction with
the spin-orbit length scale $l_{so}$, and $\varphi_0$ is the
zeroth order harmonic oscillator wavefunction
with the magnetic length scale $l_B$.
The central positions of the $u$-directional and $x$-directional
oscillators are
\bea
x_0(k_y)=l_B^2 k_y, \ \ \, u_0(k_z,m=0)=l_{so}^2 k_z,
\eea
respectively.
The key point is that $u_0(k_z,m=0)$ runs across the entire $u$-axis.
In contrast, wavefunctions $\Psi_{n,m}$ with $m\ge 1$
also exhibit harmonic oscillator
wavefunctions along the $u$-axis.
However, their central positions at $m\ge 1$ are,
\bea
u_0(k_z)=\pm l_{so}^2 \sqrt{k_z^2+\frac{2m}{l_B^2}},
\eea
which only lie in half of the $u$-axis as
shown in Fig. \ref{fig:3Ddemon} (C).

Since $k_z$ increases with time in the presence of $E_z$,
$u_0(m,k_z(t))$ moves along the $u$-axis.
Only the $m=0$ branch of the magnetic Landau level states contribute
to the charge pumping since their centers go across the entire
$u$-axis, which results in an electric current along the $u$-direction.
Since $k_z(t)=k_z(0)-\frac{eE}{\hbar}t$,
during the time interval $\Delta t$, the number of electrons
passing the cross-section at a fixed $u$ is
\bea
\Delta N=\frac{L_xL_y}{2\pi l_B^2}
\frac{e E_z \Delta t }{2\pi \hbar/L_z}
=\frac{e^2}{4\pi^2\hbar^2c} \vec E \cdot \vec B V\Delta t,
\eea
where $V$ is the 3D cross-volume.
Then the current density is calculated as
\bea
j_u= n_{occ} \frac{e\Delta N}{V\Delta t}= n_{occ} \alpha \frac{e}{4 \pi^2 \hbar} \vec{E} \cdot \vec{B},
\label{eq:4Dpump}
\eea
where $\alpha$ is the fine-structure constant,
and $n_{occ}$ is the occupation number of the 4D spin-orbit Landau levels.

Eq. \ref{eq:4Dpump} is in agreement with
results from the effective field theory \cite{qi2008} as the 4D generalization
of the quantum Hall effect.
If we impose the open boundary condition perpendicular to the $u$-direction,
the above charge pump process corresponds to the chiral anomalies of Weyl
fermions with opposite chiralities on two opposite 3D boundaries, respectively.
Since they are spatially separated, the chiral current corresponds to the
electric current along the $u$-direction.


\section{Conclusions and outlooks}
\label{sect:conclusion}

I have reviewed a general framework to construct Landau
levels in high dimensions based on harmonic oscillator wavefunctions.
By imposing spin-orbit coupling, their spectra are reorganized to exhibit
flat dispersions.
In particular, the lowest Landau level wavefunctions in 3D and 4D in the quaternion representation satisfy the
Cauchy-Riemann-Fueter condition, which is the generalization
of complex analyticity to high dimensions.
The boundary excitations are the 2D helical Dirac surface modes,
or, the 3D chiral Weyl modes.
There is a beautiful bulk-boundary correspondence that the
Cauchy-Riemann-Fueter condition and the helical Dirac (chiral Weyl)
equation are the Euclidean and Minkowski representations of the
same analyticity condition, respectively.
By dimensional reductions, we constructed a class of Landau levels
in 2D and 3D which are time-reversal invariant but parity breaking.
The Landau level problem for Dirac fermions is a square-root problem of
the non-relativistic one, corresponding to complex quaternions.
The zeroth Landau level states are a flat band of half-fermion
Jackiw-Rebbi zero modes.
It is at the interface between condensed matter and high energy
physics, related to a new type of anomaly.
Unlike parity anomaly and chiral anomaly studied in field theory
in which Dirac fermions are coupled to gauge fields through
the minimal coupling, here Dirac fermions are coupled to background
fields in a non-minimal way.

I speculate that high-dimensional Landau levels could provide a platform
for exploring interacting topological states in high dimensions - due
to the band flatness, and also the quaternionic analyticity of lowest
Landau level wavefunctions.
It would stimulate the developments of various theoretical and
numerical methods.
This would be an important direction in both condensed matter physics
and mathematical physics for studying high
dimensional topological states for both
non-relativistic and relativistic fermions.
This research also provides interesting applications of quaternion
analysis in theoretical physics.


\section{Acknowledgments}
I thank Yi Li for collaborations on this set of works on high-dimensional topological states and for bringing in interesting concepts including the quaternionic analyticity.
I also thank J. E. Hirsch for stimulating discussions, and S. C. Zhang,
T. L. Ho, E. H. Fradkin, S. Das Sarma, F. D. M. Haldane, and C. N. Yang for their warm encouragements and appreciations.

\appendix
\section{Brief review on Clifford algebra}
\label{appendix:cliff}
In this part, we review how to construct anti-commutative $\Gamma$-matrices.
The familiar group is just the $2\times 2$ Pauli matrices, i.e., rank-1.
The rank-$k$ $\Gamma$-matrices can be defined recursively based on
the rank-$(k-1)$ ones.
At each level, there are $2k+1$ anti-commutative matrices, and their
dimensions are $2^k\times 2^k$.
In this article, we use the following representation,
\bea
\Gamma^{(k)}_i&=&\left[
\begin{array}{cc}
0& \Gamma_a^{(k-1)}\\
\Gamma_a^{(k-1)}& 0
\end{array}
\right], \ \ \,
\Gamma^{(k)}_{2k}=\left[
\begin{array}{cc}
0& -i I \\
iI & 0
\end{array}
\right], \nn \\
\Gamma^{(k)}_{2k+1}&=&
\left[
\begin{array}{cc}
I& 0 \\
0& -I
\end{array}
\right],
\label{eq:clifford}
\eea
where $i=1,..., 2k-1$.

In $D=2k+1$-dimensional space, the $SO(2k+1)$ fundamental spinor is $2^k$-dimensional.
The generators are constructed $S_{ij}=\frac{1}{2}\Gamma_{ij}^{(k)}$
where
\bea
\Gamma^{(k)}_{ij}=-\frac{i}{2}[\Gamma^{(k)}_i, \Gamma_j^{(k)}].
\eea

In the $D=2k$-dimensional space, there are two irreducible fundamental
spinor representations for the $SO(2k)$ group, both of which
are with $2^{k-1}$-dimensional.
Their generators are denoted as $S_{ij}$ and $S_{ij}^\prime$, respectively,
which can be constructed based on both rank-$(k-1)$
$\Gamma_i^{(k-1)}$ and $\Gamma_{ij}^{(k-1)}$-matrices.
For the first $2k-1$ dimensions, the generators share the same form
as that of the $SO(2k-1)$ group,
\bea
S_{ij}=S^\prime_{ij}=\frac{1}{2}\Gamma_{ij}^{(k-1)}, \ \ \, (1\le i < j \le 2k-1).
\label{eq:so2k_1}
\eea
Other generators $S_{i,2k}$ and $S_{i,2k}^\prime$
differ by a sign -- they are represented by the $\Gamma^{(k-1)}_i$
matrices,
\bea
S_{i,2k}=S^\prime_{i,2k}=
\pm \frac{1}{2}\Gamma_{i}^{(k-1)}, \ \ \, (1\le i \le 2k-1).
\label{eq:so2k_2}
\eea

\bibliography{TI,topo,wu_pub}

\begin{thebibliography}{62}
\expandafter\ifx\csname natexlab\endcsname\relax\def\natexlab#1{#1}\fi
\expandafter\ifx\csname bibnamefont\endcsname\relax
  \def\bibnamefont#1{#1}\fi
\expandafter\ifx\csname bibfnamefont\endcsname\relax
  \def\bibfnamefont#1{#1}\fi
\expandafter\ifx\csname citenamefont\endcsname\relax
  \def\citenamefont#1{#1}\fi
\expandafter\ifx\csname url\endcsname\relax
  \def\url#1{\texttt{#1}}\fi
\expandafter\ifx\csname urlprefix\endcsname\relax\def\urlprefix{URL }\fi
\providecommand{\bibinfo}[2]{#2}
\providecommand{\eprint}[2][]{\url{#2}}

\bibitem[{\citenamefont{Sudbery}(1979)}]{sudbery1979}
\bibinfo{author}{\bibfnamefont{A.}~\bibnamefont{Sudbery}},
  \bibinfo{journal}{Math. Proc. Cambridge Philos. Soc}
  \textbf{\bibinfo{volume}{85}}, \bibinfo{pages}{199} (\bibinfo{year}{1979}).

\bibitem[{\citenamefont{Adler}(1995)}]{adler1995}
\bibinfo{author}{\bibfnamefont{S.~L.} \bibnamefont{Adler}},
  \emph{\bibinfo{title}{Quaternionic quantum mechanics and quantum fields}},
  vol.~\bibinfo{volume}{88} (\bibinfo{publisher}{Oxford University Press, USA},
  \bibinfo{year}{1995}).

\bibitem[{\citenamefont{Finkelstein}(1962)}]{finkelstein1962}
\bibinfo{author}{\bibfnamefont{D.}~\bibnamefont{Finkelstein}},
  \bibinfo{journal}{Journal of Mathematical Physics}
  \textbf{\bibinfo{volume}{3}}, \bibinfo{pages}{207} (\bibinfo{year}{1962}).

\bibitem[{\citenamefont{Yang}(2005)}]{yang2005}
\bibinfo{author}{\bibfnamefont{C.~N.} \bibnamefont{Yang}},
  \emph{\bibinfo{title}{Selected Papers (1945-1980) with Commentary}}
  (\bibinfo{publisher}{World Scientific}, \bibinfo{year}{2005}).

\bibitem[{\citenamefont{Bernevig and Zhang}(2006)}]{bernevig2006a}
\bibinfo{author}{\bibfnamefont{B.~A.} \bibnamefont{Bernevig}} \bibnamefont{and}
  \bibinfo{author}{\bibfnamefont{S.~C.} \bibnamefont{Zhang}},
  \bibinfo{journal}{Phys. Rev. Lett.} \textbf{\bibinfo{volume}{96}},
  \bibinfo{pages}{106802} (\bibinfo{year}{2006}), ISSN
  \bibinfo{issn}{1079-7114}.

\bibitem[{\citenamefont{Kane and Mele}(2005{\natexlab{a}})}]{kane2005}
\bibinfo{author}{\bibfnamefont{C.~L.} \bibnamefont{Kane}} \bibnamefont{and}
  \bibinfo{author}{\bibfnamefont{E.~J.} \bibnamefont{Mele}},
  \bibinfo{journal}{Phys. Rev. Lett.} \textbf{\bibinfo{volume}{95}},
  \bibinfo{pages}{146802} (\bibinfo{year}{2005}{\natexlab{a}}), ISSN
  \bibinfo{issn}{1079-7114}.

\bibitem[{\citenamefont{Kane and Mele}(2005{\natexlab{b}})}]{kane2005a}
\bibinfo{author}{\bibfnamefont{C.~L.} \bibnamefont{Kane}} \bibnamefont{and}
  \bibinfo{author}{\bibfnamefont{E.~J.} \bibnamefont{Mele}},
  \bibinfo{journal}{Phys. Rev. Lett.} \textbf{\bibinfo{volume}{95}},
  \bibinfo{pages}{226801} (\bibinfo{year}{2005}{\natexlab{b}}), ISSN
  \bibinfo{issn}{1079-7114}.

\bibitem[{\citenamefont{Fu and Kane}(2007)}]{fu2007}
\bibinfo{author}{\bibfnamefont{L.}~\bibnamefont{Fu}} \bibnamefont{and}
  \bibinfo{author}{\bibfnamefont{C.~L.} \bibnamefont{Kane}},
  \bibinfo{journal}{Phys. Rev. B} \textbf{\bibinfo{volume}{76}},
  \bibinfo{pages}{045302} (\bibinfo{year}{2007}), ISSN
  \bibinfo{issn}{1550-235X}.

\bibitem[{\citenamefont{Fu et~al.}(2007)\citenamefont{Fu, Kane, and
  Mele}}]{fu2007a}
\bibinfo{author}{\bibfnamefont{L.}~\bibnamefont{Fu}},
  \bibinfo{author}{\bibfnamefont{C.~L.} \bibnamefont{Kane}}, \bibnamefont{and}
  \bibinfo{author}{\bibfnamefont{E.~J.} \bibnamefont{Mele}},
  \bibinfo{journal}{Phys. Rev. Lett.} \textbf{\bibinfo{volume}{98}},
  \bibinfo{pages}{106803} (\bibinfo{year}{2007}), ISSN
  \bibinfo{issn}{1079-7114}.

\bibitem[{\citenamefont{Moore and Balents}(2007)}]{moore2007}
\bibinfo{author}{\bibfnamefont{J.~E.} \bibnamefont{Moore}} \bibnamefont{and}
  \bibinfo{author}{\bibfnamefont{L.}~\bibnamefont{Balents}},
  \bibinfo{journal}{Phys. Rev. B} \textbf{\bibinfo{volume}{75}},
  \bibinfo{pages}{121306} (\bibinfo{year}{2007}), ISSN
  \bibinfo{issn}{1550-235X}.

\bibitem[{\citenamefont{Bernevig et~al.}(2006)\citenamefont{Bernevig, Hughes,
  and Zhang}}]{bernevig2006}
\bibinfo{author}{\bibfnamefont{B.~A.} \bibnamefont{Bernevig}},
  \bibinfo{author}{\bibfnamefont{T.~L.} \bibnamefont{Hughes}},
  \bibnamefont{and} \bibinfo{author}{\bibfnamefont{S.~C.} \bibnamefont{Zhang}},
  \bibinfo{journal}{Science} \textbf{\bibinfo{volume}{314}},
  \bibinfo{pages}{1757} (\bibinfo{year}{2006}).

\bibitem[{\citenamefont{Wu et~al.}(2006)\citenamefont{Wu, Bernevig, and
  Zhang}}]{wu2006}
\bibinfo{author}{\bibfnamefont{C.}~\bibnamefont{Wu}},
  \bibinfo{author}{\bibfnamefont{B.}~\bibnamefont{Bernevig}}, \bibnamefont{and}
  \bibinfo{author}{\bibfnamefont{S.-C.} \bibnamefont{Zhang}},
  \bibinfo{journal}{Phys. Rev. Lett.} \textbf{\bibinfo{volume}{96}},
  \bibinfo{pages}{106401} (\bibinfo{year}{2006}), ISSN
  \bibinfo{issn}{1079-7114}.

\bibitem[{\citenamefont{Qi et~al.}(2008)\citenamefont{Qi, Hughes, and
  Zhang}}]{qi2008}
\bibinfo{author}{\bibfnamefont{X.~L.} \bibnamefont{Qi}},
  \bibinfo{author}{\bibfnamefont{T.~L.} \bibnamefont{Hughes}},
  \bibnamefont{and} \bibinfo{author}{\bibfnamefont{S.~C.} \bibnamefont{Zhang}},
  \bibinfo{journal}{Phys. Rev. B} \textbf{\bibinfo{volume}{78}},
  \bibinfo{pages}{195424} (\bibinfo{year}{2008}), ISSN
  \bibinfo{issn}{1550-235X}.

\bibitem[{\citenamefont{Roy}(2009)}]{roy2009}
\bibinfo{author}{\bibfnamefont{R.}~\bibnamefont{Roy}}, \bibinfo{journal}{Phys.
  Rev. B} \textbf{\bibinfo{volume}{79}}, \bibinfo{pages}{195322}
  (\bibinfo{year}{2009}), ISSN \bibinfo{issn}{1550-235X}.

\bibitem[{\citenamefont{Roy}(2010)}]{roy2010}
\bibinfo{author}{\bibfnamefont{R.}~\bibnamefont{Roy}}, \bibinfo{journal}{New J.
  Phys.} \textbf{\bibinfo{volume}{12}}, \bibinfo{pages}{065009}
  (\bibinfo{year}{2010}).

\bibitem[{\citenamefont{Thouless et~al.}(1982)\citenamefont{Thouless, Kohmoto,
  Nightingale, and den Nijs}}]{thouless1982}
\bibinfo{author}{\bibfnamefont{D.~J.} \bibnamefont{Thouless}},
  \bibinfo{author}{\bibfnamefont{M.}~\bibnamefont{Kohmoto}},
  \bibinfo{author}{\bibfnamefont{M.~P.} \bibnamefont{Nightingale}},
  \bibnamefont{and} \bibinfo{author}{\bibfnamefont{M.}~\bibnamefont{den Nijs}},
  \bibinfo{journal}{Phys. Rev. Lett.} \textbf{\bibinfo{volume}{49}},
  \bibinfo{pages}{405} (\bibinfo{year}{1982}), ISSN \bibinfo{issn}{1079-7114}.

\bibitem[{\citenamefont{Haldane}(1988)}]{haldane1988}
\bibinfo{author}{\bibfnamefont{F.~D.~M.} \bibnamefont{Haldane}},
  \bibinfo{journal}{Phys. Rev. Lett.} \textbf{\bibinfo{volume}{61}},
  \bibinfo{pages}{2015} (\bibinfo{year}{1988}), ISSN \bibinfo{issn}{1079-7114}.

\bibitem[{\citenamefont{Kitaev}(2009)}]{kitaev2009}
\bibinfo{author}{\bibfnamefont{A.}~\bibnamefont{Kitaev}}, in
  \emph{\bibinfo{booktitle}{American Institute of Physics Conference Series}}
  (\bibinfo{year}{2009}), vol. \bibinfo{volume}{1134}, pp.
  \bibinfo{pages}{22--30}.

\bibitem[{\citenamefont{Schnyder et~al.}(2008)\citenamefont{Schnyder, Ryu,
  Furusaki, and Ludwig}}]{schnyder2008}
\bibinfo{author}{\bibfnamefont{A.}~\bibnamefont{Schnyder}},
  \bibinfo{author}{\bibfnamefont{S.}~\bibnamefont{Ryu}},
  \bibinfo{author}{\bibfnamefont{A.}~\bibnamefont{Furusaki}}, \bibnamefont{and}
  \bibinfo{author}{\bibfnamefont{A.}~\bibnamefont{Ludwig}},
  \bibinfo{journal}{Phys. Rev. B} \textbf{\bibinfo{volume}{78}},
  \bibinfo{pages}{195125} (\bibinfo{year}{2008}), ISSN
  \bibinfo{issn}{1550-235X}.

\bibitem[{\citenamefont{Klitzing et~al.}(1980)\citenamefont{Klitzing, Dorda,
  and Pepper}}]{klitzing1980}
\bibinfo{author}{\bibfnamefont{K.}~\bibnamefont{Klitzing}},
  \bibinfo{author}{\bibfnamefont{G.}~\bibnamefont{Dorda}}, \bibnamefont{and}
  \bibinfo{author}{\bibfnamefont{M.}~\bibnamefont{Pepper}},
  \bibinfo{journal}{Phys. Rev. Lett.} \textbf{\bibinfo{volume}{45}},
  \bibinfo{pages}{494} (\bibinfo{year}{1980}), ISSN \bibinfo{issn}{1079-7114}.

\bibitem[{\citenamefont{Tsui et~al.}(1982)\citenamefont{Tsui, Stormer, and
  Gossard}}]{tsui1982}
\bibinfo{author}{\bibfnamefont{D.~C.} \bibnamefont{Tsui}},
  \bibinfo{author}{\bibfnamefont{H.~L.} \bibnamefont{Stormer}},
  \bibnamefont{and} \bibinfo{author}{\bibfnamefont{A.~C.}
  \bibnamefont{Gossard}}, \bibinfo{journal}{Phys. Rev. Lett.}
  \textbf{\bibinfo{volume}{48}}, \bibinfo{pages}{1559} (\bibinfo{year}{1982}),
  \urlprefix\url{http://link.aps.org/doi/10.1103/PhysRevLett.48.1559}.

\bibitem[{\citenamefont{Girvin}(1999)}]{girvin1999}
\bibinfo{author}{\bibfnamefont{S.}~\bibnamefont{Girvin}},
  \bibinfo{journal}{Aspects topologiques de la physique en basse dimension.
  Topological aspects of low dimensional systems} pp. \bibinfo{pages}{53--175}
  (\bibinfo{year}{1999}).

\bibitem[{\citenamefont{Laughlin}(1983)}]{laughlin1983}
\bibinfo{author}{\bibfnamefont{R.}~\bibnamefont{Laughlin}},
  \bibinfo{journal}{Phys. Rev. Lett.} \textbf{\bibinfo{volume}{50}},
  \bibinfo{pages}{1395} (\bibinfo{year}{1983}), ISSN \bibinfo{issn}{1079-7114}.

\bibitem[{\citenamefont{Zhang and Hu}(2001)}]{zhang2001}
\bibinfo{author}{\bibfnamefont{S.~C.} \bibnamefont{Zhang}} \bibnamefont{and}
  \bibinfo{author}{\bibfnamefont{J.~P.} \bibnamefont{Hu}},
  \bibinfo{journal}{Science} \textbf{\bibinfo{volume}{294}},
  \bibinfo{pages}{823} (\bibinfo{year}{2001}).

\bibitem[{\citenamefont{Haldane}(1983)}]{haldane1983}
\bibinfo{author}{\bibfnamefont{F.~D.~M.} \bibnamefont{Haldane}},
  \bibinfo{journal}{Phys. Rev. Lett.} \textbf{\bibinfo{volume}{51}},
  \bibinfo{pages}{605} (\bibinfo{year}{1983}).

\bibitem[{\citenamefont{Li et~al.}(2012{\natexlab{a}})\citenamefont{Li, Zhou,
  and Wu}}]{li2012a}
\bibinfo{author}{\bibfnamefont{Y.}~\bibnamefont{Li}},
  \bibinfo{author}{\bibfnamefont{X.}~\bibnamefont{Zhou}}, \bibnamefont{and}
  \bibinfo{author}{\bibfnamefont{C.}~\bibnamefont{Wu}}, \bibinfo{journal}{Phys.
  Rev. B} \textbf{\bibinfo{volume}{85}}, \bibinfo{pages}{125122}
  (\bibinfo{year}{2012}{\natexlab{a}}).

\bibitem[{\citenamefont{Li and Wu}(2013)}]{li2013}
\bibinfo{author}{\bibfnamefont{Y.}~\bibnamefont{Li}} \bibnamefont{and}
  \bibinfo{author}{\bibfnamefont{C.}~\bibnamefont{Wu}}, \bibinfo{journal}{Phys.
  Rev. Lett.} \textbf{\bibinfo{volume}{110}}, \bibinfo{pages}{216802}
  (\bibinfo{year}{2013}),
  \urlprefix\url{https://link.aps.org/doi/10.1103/PhysRevLett.110.216802}.

\bibitem[{\citenamefont{Li et~al.}(2012{\natexlab{b}})\citenamefont{Li,
  Intriligator, Yu, and Wu}}]{li2012}
\bibinfo{author}{\bibfnamefont{Y.}~\bibnamefont{Li}},
  \bibinfo{author}{\bibfnamefont{K.}~\bibnamefont{Intriligator}},
  \bibinfo{author}{\bibfnamefont{Y.}~\bibnamefont{Yu}}, \bibnamefont{and}
  \bibinfo{author}{\bibfnamefont{C.}~\bibnamefont{Wu}}, \bibinfo{journal}{Phys.
  Rev. B} \textbf{\bibinfo{volume}{85}}, \bibinfo{pages}{085132}
  (\bibinfo{year}{2012}{\natexlab{b}}).

\bibitem[{\citenamefont{Jackiw and Rebbi}(1976)}]{jackiw1976}
\bibinfo{author}{\bibfnamefont{R.}~\bibnamefont{Jackiw}} \bibnamefont{and}
  \bibinfo{author}{\bibfnamefont{C.}~\bibnamefont{Rebbi}},
  \bibinfo{journal}{Phys. Rev. D} \textbf{\bibinfo{volume}{13}},
  \bibinfo{pages}{3398} (\bibinfo{year}{1976}),
  \urlprefix\url{http://link.aps.org/doi/10.1103/PhysRevD.13.3398}.

\bibitem[{\citenamefont{Li et~al.}(2013)\citenamefont{Li, Zhang, and
  Wu}}]{li2013a}
\bibinfo{author}{\bibfnamefont{Y.}~\bibnamefont{Li}},
  \bibinfo{author}{\bibfnamefont{S.-C.} \bibnamefont{Zhang}}, \bibnamefont{and}
  \bibinfo{author}{\bibfnamefont{C.}~\bibnamefont{Wu}}, \bibinfo{journal}{Phys.
  Rev. Lett.} \textbf{\bibinfo{volume}{111}}, \bibinfo{pages}{186803}
  (\bibinfo{year}{2013}),
  \urlprefix\url{https://link.aps.org/doi/10.1103/PhysRevLett.111.186803}.

\bibitem[{\citenamefont{Frenkel and Libine}(2008)}]{frenkel2008}
\bibinfo{author}{\bibfnamefont{I.}~\bibnamefont{Frenkel}} \bibnamefont{and}
  \bibinfo{author}{\bibfnamefont{M.}~\bibnamefont{Libine}},
  \bibinfo{journal}{Advances in Mathematics} \textbf{\bibinfo{volume}{218}},
  \bibinfo{pages}{1806 } (\bibinfo{year}{2008}), ISSN
  \bibinfo{issn}{0001-8708},
  \urlprefix\url{http://www.sciencedirect.com/science/article/pii/S0001870808000935}.

\bibitem[{\citenamefont{Balatsky}(1992)}]{balatsky1992}
\bibinfo{author}{\bibfnamefont{A.~V.} \bibnamefont{Balatsky}},
  \bibinfo{journal}{arXiv:cond-mat/9205006}  (\bibinfo{year}{1992}).

\bibitem[{\citenamefont{Semenoff}(1984)}]{semenoff1984}
\bibinfo{author}{\bibfnamefont{G.~W.} \bibnamefont{Semenoff}},
  \bibinfo{journal}{Phys. Rev. Lett.} \textbf{\bibinfo{volume}{53}},
  \bibinfo{pages}{2449} (\bibinfo{year}{1984}).

\bibitem[{\citenamefont{Heeger et~al.}(1988)\citenamefont{Heeger, Kivelson,
  Schrieffer, and Su}}]{heeger1988}
\bibinfo{author}{\bibfnamefont{A.~J.} \bibnamefont{Heeger}},
  \bibinfo{author}{\bibfnamefont{S.}~\bibnamefont{Kivelson}},
  \bibinfo{author}{\bibfnamefont{J.~R.} \bibnamefont{Schrieffer}},
  \bibnamefont{and} \bibinfo{author}{\bibfnamefont{W.~P.} \bibnamefont{Su}},
  \bibinfo{journal}{Rev. Mod. Phys.} \textbf{\bibinfo{volume}{60}},
  \bibinfo{pages}{781} (\bibinfo{year}{1988}),
  \urlprefix\url{http://link.aps.org/doi/10.1103/RevModPhys.60.781}.

\bibitem[{\citenamefont{Redlich}(1984{\natexlab{a}})}]{redlich1984}
\bibinfo{author}{\bibfnamefont{A.~N.} \bibnamefont{Redlich}},
  \bibinfo{journal}{Phys. Rev. Lett.} \textbf{\bibinfo{volume}{52}},
  \bibinfo{pages}{18} (\bibinfo{year}{1984}{\natexlab{a}}),
  \urlprefix\url{http://link.aps.org/doi/10.1103/PhysRevLett.52.18}.

\bibitem[{\citenamefont{Redlich}(1984{\natexlab{b}})}]{redlich1984a}
\bibinfo{author}{\bibfnamefont{A.~N.} \bibnamefont{Redlich}},
  \bibinfo{journal}{Phys. Rev. D} \textbf{\bibinfo{volume}{29}},
  \bibinfo{pages}{2366} (\bibinfo{year}{1984}{\natexlab{b}}),
  \urlprefix\url{http://link.aps.org/doi/10.1103/PhysRevD.29.2366}.

\bibitem[{\citenamefont{Niemi and Semenoff}(1986)}]{niemi1986}
\bibinfo{author}{\bibfnamefont{A.~J.} \bibnamefont{Niemi}} \bibnamefont{and}
  \bibinfo{author}{\bibfnamefont{G.~W.} \bibnamefont{Semenoff}},
  \bibinfo{journal}{Physics Reports} \textbf{\bibinfo{volume}{135}},
  \bibinfo{pages}{99} (\bibinfo{year}{1986}).

\bibitem[{\citenamefont{Streda}(1982)}]{streda1982}
\bibinfo{author}{\bibfnamefont{P.}~\bibnamefont{Streda}},
  \bibinfo{journal}{Journal of Physics C: Solid State Physics}
  \textbf{\bibinfo{volume}{15}}, \bibinfo{pages}{L1299} (\bibinfo{year}{1982}),
  \urlprefix\url{http://stacks.iop.org/0022-3719/15/i=36/a=006}.

\bibitem[{\citenamefont{Bagchi}(2001)}]{bagchi2001}
\bibinfo{author}{\bibfnamefont{B.~K.} \bibnamefont{Bagchi}},
  \emph{\bibinfo{title}{{Supersymmetry in quantum and classical mechanics}}}
  (\bibinfo{publisher}{Chapman \& Hall/CRC}, \bibinfo{year}{2001}), ISBN
  \bibinfo{isbn}{1584881976}.

\bibitem[{\citenamefont{Lee and Kane}(1990)}]{Lee1990}
\bibinfo{author}{\bibfnamefont{D.~H.} \bibnamefont{Lee}} \bibnamefont{and}
  \bibinfo{author}{\bibfnamefont{C.~L.} \bibnamefont{Kane}},
  \bibinfo{journal}{Phys. Rev. Lett.} \textbf{\bibinfo{volume}{64}},
  \bibinfo{pages}{1313} (\bibinfo{year}{1990}),
  \urlprefix\url{http://link.aps.org/doi/10.1103/PhysRevLett.64.1313}.

\bibitem[{\citenamefont{Sondhi et~al.}(1993)\citenamefont{Sondhi, Karlhede,
  Kivelson, and Rezayi}}]{Sondhi1993}
\bibinfo{author}{\bibfnamefont{S.~L.} \bibnamefont{Sondhi}},
  \bibinfo{author}{\bibfnamefont{A.}~\bibnamefont{Karlhede}},
  \bibinfo{author}{\bibfnamefont{S.~A.} \bibnamefont{Kivelson}},
  \bibnamefont{and} \bibinfo{author}{\bibfnamefont{E.~H.}
  \bibnamefont{Rezayi}}, \bibinfo{journal}{Phys. Rev. B}
  \textbf{\bibinfo{volume}{47}}, \bibinfo{pages}{16419} (\bibinfo{year}{1993}).

\bibitem[{\citenamefont{Fertig et~al.}(1994)\citenamefont{Fertig, Brey,
  C\^ot\'e, and MacDonald}}]{Fertig1994}
\bibinfo{author}{\bibfnamefont{H.~A.} \bibnamefont{Fertig}},
  \bibinfo{author}{\bibfnamefont{L.}~\bibnamefont{Brey}},
  \bibinfo{author}{\bibfnamefont{R.}~\bibnamefont{C\^ot\'e}}, \bibnamefont{and}
  \bibinfo{author}{\bibfnamefont{A.~H.} \bibnamefont{MacDonald}},
  \bibinfo{journal}{Phys. Rev. B} \textbf{\bibinfo{volume}{50}},
  \bibinfo{pages}{11018} (\bibinfo{year}{1994}),
  \urlprefix\url{http://link.aps.org/doi/10.1103/PhysRevB.50.11018}.

\bibitem[{\citenamefont{Read and Sachdev}(1995)}]{Read1995}
\bibinfo{author}{\bibfnamefont{N.}~\bibnamefont{Read}} \bibnamefont{and}
  \bibinfo{author}{\bibfnamefont{S.}~\bibnamefont{Sachdev}},
  \bibinfo{journal}{Phys. Rev. Lett.} \textbf{\bibinfo{volume}{75}},
  \bibinfo{pages}{3509} (\bibinfo{year}{1995}),
  \urlprefix\url{http://link.aps.org/doi/10.1103/PhysRevLett.75.3509}.

\bibitem[{\citenamefont{Haldane and Rezayi}(1985)}]{haldane1985}
\bibinfo{author}{\bibfnamefont{F.~D.~M.} \bibnamefont{Haldane}}
  \bibnamefont{and} \bibinfo{author}{\bibfnamefont{E.~H.}
  \bibnamefont{Rezayi}}, \bibinfo{journal}{Phys. Rev. Lett.}
  \textbf{\bibinfo{volume}{54}}, \bibinfo{pages}{237} (\bibinfo{year}{1985}),
  \urlprefix\url{http://link.aps.org/doi/10.1103/PhysRevLett.54.237}.

\bibitem[{\citenamefont{Prange and Girvin}(1990)}]{Prange1990}
\bibinfo{editor}{\bibfnamefont{R.~E.} \bibnamefont{Prange}} \bibnamefont{and}
  \bibinfo{editor}{\bibfnamefont{S.~M.} \bibnamefont{Girvin}}, eds.,
  \emph{\bibinfo{title}{{The Quantum Hall Effect}}}
  (\bibinfo{publisher}{Springer-Verlag}, \bibinfo{address}{New York},
  \bibinfo{year}{1990}), \bibinfo{edition}{2nd} ed.

\bibitem[{\citenamefont{Haldane}(2011)}]{haldane2011}
\bibinfo{author}{\bibfnamefont{D.}~\bibnamefont{Haldane}},
  \bibinfo{journal}{Phys. Rev. Lett.} \textbf{\bibinfo{volume}{107}},
  \bibinfo{pages}{116801} (\bibinfo{year}{2011}).

\bibitem[{\citenamefont{{Can} et~al.}(2014)\citenamefont{{Can}, {Laskin}, and
  {Wiegmann}}}]{can2014}
\bibinfo{author}{\bibfnamefont{T.}~\bibnamefont{{Can}}},
  \bibinfo{author}{\bibfnamefont{M.}~\bibnamefont{{Laskin}}}, \bibnamefont{and}
  \bibinfo{author}{\bibfnamefont{P.}~\bibnamefont{{Wiegmann}}},
  \bibinfo{journal}{Phys. Rev. Lett.} \textbf{\bibinfo{volume}{113}},
  \bibinfo{eid}{046803} (\bibinfo{year}{2014}), \eprint{1402.1531}.

\bibitem[{\citenamefont{{Klevtsov} et~al.}(2017)\citenamefont{{Klevtsov}, {Ma},
  {Marinescu}, and {Wiegmann}}}]{klevtsov2017}
\bibinfo{author}{\bibfnamefont{S.}~\bibnamefont{{Klevtsov}}},
  \bibinfo{author}{\bibfnamefont{X.}~\bibnamefont{{Ma}}},
  \bibinfo{author}{\bibfnamefont{G.}~\bibnamefont{{Marinescu}}},
  \bibnamefont{and}
  \bibinfo{author}{\bibfnamefont{P.}~\bibnamefont{{Wiegmann}}},
  \bibinfo{journal}{Communications in Mathematical Physics}
  \textbf{\bibinfo{volume}{349}}, \bibinfo{pages}{819} (\bibinfo{year}{2017}),
  \eprint{1510.06720}.

\bibitem[{\citenamefont{Wu et~al.}(2011)\citenamefont{Wu, Mondragon-Shem, and
  Xiang-Fa}}]{wu2011_Ian}
\bibinfo{author}{\bibfnamefont{C.~J.} \bibnamefont{Wu}},
  \bibinfo{author}{\bibfnamefont{I.}~\bibnamefont{Mondragon-Shem}},
  \bibnamefont{and} \bibinfo{author}{\bibfnamefont{Z.}~\bibnamefont{Xiang-Fa}},
  \bibinfo{journal}{Chinese Physics Letters} \textbf{\bibinfo{volume}{28}},
  \bibinfo{pages}{097102} (\bibinfo{year}{2011}),
  \urlprefix\url{http://stacks.iop.org/0256-307X/28/i=9/a=097102}.

\bibitem[{\citenamefont{Li et~al.}(2016)\citenamefont{Li, Zhou, and
  Wu}}]{li2016}
\bibinfo{author}{\bibfnamefont{Y.}~\bibnamefont{Li}},
  \bibinfo{author}{\bibfnamefont{X.}~\bibnamefont{Zhou}}, \bibnamefont{and}
  \bibinfo{author}{\bibfnamefont{C.}~\bibnamefont{Wu}}, \bibinfo{journal}{Phys.
  Rev. A} \textbf{\bibinfo{volume}{93}}, \bibinfo{pages}{033628}
  (\bibinfo{year}{2016}),
  \urlprefix\url{https://link.aps.org/doi/10.1103/PhysRevA.93.033628}.

\bibitem[{\citenamefont{{Wu} and {Mondragon-Shem}}(2008)}]{wuexciton2008}
\bibinfo{author}{\bibfnamefont{C.}~\bibnamefont{{Wu}}} \bibnamefont{and}
  \bibinfo{author}{\bibfnamefont{I.}~\bibnamefont{{Mondragon-Shem}}},
  \bibinfo{journal}{arXiv:0809.3532v1}  (\bibinfo{year}{2008}).

\bibitem[{\citenamefont{{Zhou} et~al.}(2013)\citenamefont{{Zhou}, {Li}, {Cai},
  and {Wu}}}]{zhouXF2013}
\bibinfo{author}{\bibfnamefont{X.}~\bibnamefont{{Zhou}}},
  \bibinfo{author}{\bibfnamefont{Y.}~\bibnamefont{{Li}}},
  \bibinfo{author}{\bibfnamefont{Z.}~\bibnamefont{{Cai}}}, \bibnamefont{and}
  \bibinfo{author}{\bibfnamefont{C.}~\bibnamefont{{Wu}}},
  \bibinfo{journal}{Journal of Physics B Atomic Molecular Physics}
  \textbf{\bibinfo{volume}{46}}, \bibinfo{eid}{134001} (\bibinfo{year}{2013}),
  \eprint{1301.5403}.

\bibitem[{\citenamefont{Hu et~al.}(2012)\citenamefont{Hu, Ramachandhran, Pu,
  and Liu}}]{hu2012}
\bibinfo{author}{\bibfnamefont{H.}~\bibnamefont{Hu}},
  \bibinfo{author}{\bibfnamefont{B.}~\bibnamefont{Ramachandhran}},
  \bibinfo{author}{\bibfnamefont{H.}~\bibnamefont{Pu}}, \bibnamefont{and}
  \bibinfo{author}{\bibfnamefont{X.~J.} \bibnamefont{Liu}},
  \bibinfo{journal}{Phys. Rev. Lett.} \textbf{\bibinfo{volume}{108}},
  \bibinfo{pages}{010402} (\bibinfo{year}{2012}),
  \urlprefix\url{http://link.aps.org/doi/10.1103/PhysRevLett.108.010402}.

\bibitem[{\citenamefont{Sinha et~al.}(2011)\citenamefont{Sinha, Nath, and
  Santos}}]{sinha2011}
\bibinfo{author}{\bibfnamefont{S.}~\bibnamefont{Sinha}},
  \bibinfo{author}{\bibfnamefont{R.}~\bibnamefont{Nath}}, \bibnamefont{and}
  \bibinfo{author}{\bibfnamefont{L.}~\bibnamefont{Santos}},
  \bibinfo{journal}{Phys. Rev. Lett.} \textbf{\bibinfo{volume}{107}},
  \bibinfo{pages}{270401} (\bibinfo{year}{2011}),
  \urlprefix\url{http://link.aps.org/doi/10.1103/PhysRevLett.107.270401}.

\bibitem[{\citenamefont{Ghosh et~al.}(2011)\citenamefont{Ghosh, Vyasanakere,
  and Shenoy}}]{ghosh2011}
\bibinfo{author}{\bibfnamefont{S.~K.} \bibnamefont{Ghosh}},
  \bibinfo{author}{\bibfnamefont{J.~P.} \bibnamefont{Vyasanakere}},
  \bibnamefont{and} \bibinfo{author}{\bibfnamefont{V.~B.}
  \bibnamefont{Shenoy}}, \bibinfo{journal}{Phys. Rev. A}
  \textbf{\bibinfo{volume}{84}}, \bibinfo{pages}{053629}
  (\bibinfo{year}{2011}),
  \urlprefix\url{http://link.aps.org/doi/10.1103/PhysRevA.84.053629}.

\bibitem[{\citenamefont{Wang et~al.}(2010)\citenamefont{Wang, Qi, and
  Zhang}}]{wang2010}
\bibinfo{author}{\bibfnamefont{Z.}~\bibnamefont{Wang}},
  \bibinfo{author}{\bibfnamefont{X.-L.} \bibnamefont{Qi}}, \bibnamefont{and}
  \bibinfo{author}{\bibfnamefont{S.-C.} \bibnamefont{Zhang}},
  \bibinfo{journal}{New J. Phys.} \textbf{\bibinfo{volume}{12}},
  \bibinfo{pages}{065007} (\bibinfo{year}{2010}).

\bibitem[{\citenamefont{Ho and Zhang}(2011)}]{ho2011}
\bibinfo{author}{\bibfnamefont{T.-L.} \bibnamefont{Ho}} \bibnamefont{and}
  \bibinfo{author}{\bibfnamefont{S.}~\bibnamefont{Zhang}},
  \bibinfo{journal}{Phys. Rev. Lett.} \textbf{\bibinfo{volume}{107}},
  \bibinfo{pages}{150403} (\bibinfo{year}{2011}),
  \urlprefix\url{http://link.aps.org/doi/10.1103/PhysRevLett.107.150403}.

\bibitem[{\citenamefont{Xiao et~al.}(2010)\citenamefont{Xiao, Chang, and
  Niu}}]{xiao2010}
\bibinfo{author}{\bibfnamefont{D.}~\bibnamefont{Xiao}},
  \bibinfo{author}{\bibfnamefont{M.}~\bibnamefont{Chang}}, \bibnamefont{and}
  \bibinfo{author}{\bibfnamefont{Q.}~\bibnamefont{Niu}}, \bibinfo{journal}{Rev.
  Mod. Phys.} \textbf{\bibinfo{volume}{82}}, \bibinfo{pages}{1959}
  (\bibinfo{year}{2010}), ISSN \bibinfo{issn}{0034-6861}.

\bibitem[{\citenamefont{Callan and Harvey}(1985)}]{Callan1985}
\bibinfo{author}{\bibfnamefont{C.~G.} \bibnamefont{Callan}} \bibnamefont{and}
  \bibinfo{author}{\bibfnamefont{J.~A.} \bibnamefont{Harvey}},
  \bibinfo{journal}{Nuclear Physics B} \textbf{\bibinfo{volume}{250}},
  \bibinfo{pages}{427 } (\bibinfo{year}{1985}), ISSN \bibinfo{issn}{0550-3213},
  \urlprefix\url{http://www.sciencedirect.com/science/article/pii/0550321385904894}.

\bibitem[{\citenamefont{Lee and Leinaas}(2004)}]{lee2004}
\bibinfo{author}{\bibfnamefont{D.~H.} \bibnamefont{Lee}} \bibnamefont{and}
  \bibinfo{author}{\bibfnamefont{J.~M.} \bibnamefont{Leinaas}},
  \bibinfo{journal}{Phys. Rev. Lett.} \textbf{\bibinfo{volume}{92}},
  \bibinfo{pages}{096401} (\bibinfo{year}{2004}),
  \urlprefix\url{http://link.aps.org/doi/10.1103/PhysRevLett.92.096401}.

\bibitem[{\citenamefont{Werner}(2012)}]{Werner2012}
\bibinfo{author}{\bibfnamefont{P.}~\bibnamefont{Werner}},
  \bibinfo{journal}{arXiv:1207.4954}  (\bibinfo{year}{2012}).

\bibitem[{\citenamefont{{Fr{\"o}hlich} and {Pedrini}}(2000)}]{Frohlich2000}
\bibinfo{author}{\bibfnamefont{J.}~\bibnamefont{{Fr{\"o}hlich}}}
  \bibnamefont{and}
  \bibinfo{author}{\bibfnamefont{B.}~\bibnamefont{{Pedrini}}},
  \bibinfo{journal}{arXiv:hep-th/0002195}  (\bibinfo{year}{2000}).

\end{thebibliography}


\end{document}